\pgfplotsset{compat=newest}
\DeclareMathOperator{\sgn}{sgn}
\DeclareMathOperator{\vect}{vec}
\DeclareMathOperator*{\argmin}{arg\,min}
\DeclarePairedDelimiter\norm{\lVert}{\rVert}
\let\oldnorm\norm
\def\norm{\@ifstar{\oldnorm}{\oldnorm*}}
\newtheorem{proposition}{Proposition}
\newcommand{\review}[1]{\textcolor{black}{#1}}
\journal{Signal Processing}
\begin{document}

\begin{frontmatter}



\title{Through the Wall Radar Imaging via Kronecker-structured \review{Huber-type} RPCA}

\affiliation[sondra]{organization={SONDRA, CentraleSupélec},
            city={Gif-sur-Yvette},
            postcode={91190}, 
            country={France}}
        
\affiliation[leme]{organization={LEME, Université Paris-Nanterre},
	city={Ville d'Avray},
	postcode={92410}, 
	country={France}}

\affiliation[listic]{organization={LISTIC, Université Savoie Mont-Blanc},
	city={Annecy},
	postcode={74940}, 
	country={France}}

\author[sondra]{Hugo Brehier\corref{cor1}}
\cortext[cor1]{Corresponding author}
\ead{hugo.brehier@centralesupelec.fr}

\author[leme]{Arnaud Breloy}
\author[sondra]{Chengfang Ren}
\author[listic]{Guillaume Ginolhac}

\begin{abstract}
The detection of multiple targets in an enclosed scene, from its outside, is a challenging topic of research addressed by Through-the-Wall Radar Imaging (TWRI). Traditionally, TWRI methods operate in two steps: first the removal of wall clutter then followed by the recovery of targets positions. Recent approaches manage in parallel the processing of the wall and targets via low rank plus sparse matrix decomposition and obtain better performances. In this paper, we reformulate this precisely via a RPCA-type problem, where the sparse vector appears in a Kronecker product. We extend this approach by adding a robust distance with flexible structure to handle heterogeneous noise and outliers, which may appear in TWRI measurements. The resolution is achieved via the Alternating Direction Method of Multipliers (ADMM) and variable splitting to decouple the constraints. The removal of the front wall is achieved via a closed-form proximal evaluation and the recovery of targets is possible via a tailored Majorization-Minimization (MM) step. The analysis and validation of our method is carried out using Finite-Difference Time-Domain (FDTD) simulated data, which show the advantage of our method in detection performance over complex scenarios.
\end{abstract}


\begin{highlights}
\item Reformulation of TWRI detection via RPCA
\item Addition of a structured robust distance for heterogeneous noises
\item Resolution via ADMM coupled with variable splitting
\item Analysis of the proposed the method on FDTD simulations
\end{highlights}

\begin{keyword}
Through-the-Wall Radar Imaging \sep RPCA
 \sep Huber distance \sep Majorization-Minimization
 \sep ADMM \sep variable splitting

\end{keyword}

\end{frontmatter}


\section{Introduction}
\label{sec:intro}

Through the Wall Radar Imaging is a current topic of research (see e.g. \cite{amin_TTW_book} for a comprehensive review) that aims at detecting targets in an enclosed scene from its outside via radar measurements, the scene being unobservable to the naked eye. 
It makes use of the penetrative properties of electromagnetic waves to obtain returns from the inside of the scene while having to filter the front wall echoes. The ability to observe scenes though wall or other similar types of obstacle would be a useful technique for military operations, civilian rescue operations and monitoring \cite{Li2021_ttw_monitoring,Yang2021_monitoring}. 
Many problems appear in the context of TWRI. 
Firstly, the front wall (facing the radar) returns are overwhelming and obscure the enclosed scene. 
Secondly, the echoes from the enclosed scene are subject to different phenomena: clutter from the inner walls gets mixed with the target echoes.
Moreover, those returns can travel across different paths, so-called multipaths, which may create ghost targets. 

Past works have focused on different aspects of the topic of TWRI such as localisation of targets, change detection, movement characterization \cite{debes2011_classif, clemente2013microdoppler, Gennarelli2015_tracking, Li2019_tracking}. Here, we focus on the localisation of stationary targets which can be readily extended to moving targets by collecting measurement over time and applying the same methodology. We will focus on a 2D scenario which necessitates the use of multiple antennas (or a single travelling one) to achieve a sufficient resolution.
A standard hypothesis in TWRI is for the wall to be homogeneous, with permittivity and thickness considered to be known, or to be estimated in a previous step \cite{protiva2011_estim_wall_params, Jin2013_estim_wall_params}. Other works have developed methods for the unknown case based on focusing techniques \cite{amin2006, ahmad_2007}.
In an earlier phase of TWRI, some methods \cite{ahmad_2008, Dehmollaian2008_sar} were developed that use Synthetic Aperture Radar (SAR) techniques \cite{Soumekh_SAR} such as Back-Projection (BP). Those methods require the acquisition of measurements from an empty scene to remove the front wall. 
Subsequently, two-step techniques were developed \cite{CS_TTW_Radar_Imaging} which consist in: a) filtering the front wall echoes based on subspace decomposition \cite{verma2009clutter,svd_clutter_mitig, subspace_proj} b) recovering the target positions, based on the hypothesis of sparsity of the targets w.r.t. the scene dimensions, with the possible use of Compressive Sensing methods to reduce computation times \cite{huang2010}. 
This approach requires the use of a dictionary to map the returns onto a grid covering the scene. This formalism also allows handling multipaths or front wall reflections more precisely \cite{leigsnering2014multipath}. 
Building on this, one-step methods have been explored during the past years via the framework of Robust Principal Component Analysis (RPCA) \cite{candes2011rpca,chandrasekaran2011rpca,Mardani_2013} which allows the joint decomposition of a matrix in two separate components: one being low rank and the other being sparse, the two parts capturing respectively the returns of the front wall and the returns of the targets. Such one-step methods have been shown to perform better than their counterparts in several radar experiments \cite{tang2016, tang2020, breloy2018_rsc_rd, meriaux2019_modifssc, brehier2022robust}.

We build upon these more recent approaches and address some of their limitations in the context of TWRI. A question to be raised is the robustness of those methods in the case that the measurements do not respect the model perfectly. Indeed, the returns from the front wall may not be homogeneous as supposed: the structure of drywall may for example create a discrepancy of the returns power among the different radar positions. Moreover, the permittivity of the wall that is supposed to be frequency-independent may be too restrictive and its treatment may lead to better performances. This has motivated us to inspect the addition of a robust distance \cite{huber1964} with flexible structure to one-step matrix decomposition methods applied to TWRI.

To do so, we first formalize the TWRI problem in the context of RPCA, which leads to the sparse component appearing in a Kronecker product, a special case of the model in \cite{Mardani_2013}. We make adjustments to previous work in TWRI by using the ADMM framework \cite{admm}. This first part concluded with the presentation of a method, we then introduce the use of a robust distance with a flexible structure. This allows us to handle heterogeneous noise or outliers in the data closeness term, which may distort the results grossly with the usual euclidian distance. This was suggested in \cite{Aravkin2014_varpcp} but not developed. We present two methods for the resolution of this problem which make use of ADMM in tandem with variable splitting and either Proximal Gradient Descent (PGD) or MM frameworks.

The following sections of the paper are organized as follows. Section \ref{sec:existing_model} presents a standard model for the measurements and describes existing two-step and one-step  methods for TWRI. In Section \ref{sec:robust_krpca}, we introduce and develop the robust extension to one-step methods. Section \ref{sec:experiments} follows with experiments done on simulated data to compare the performance of the different methods. Finally, Section \ref{sec:conclusion} summarizes the advantages of our method and perspectives of future work.

\section{Existing TWRI models and methods} \label{sec:existing_model}
\subsection{Setting and signal model}

\begin{figure}[!h]
	\centering
	\includegraphics[width = .5\textwidth]{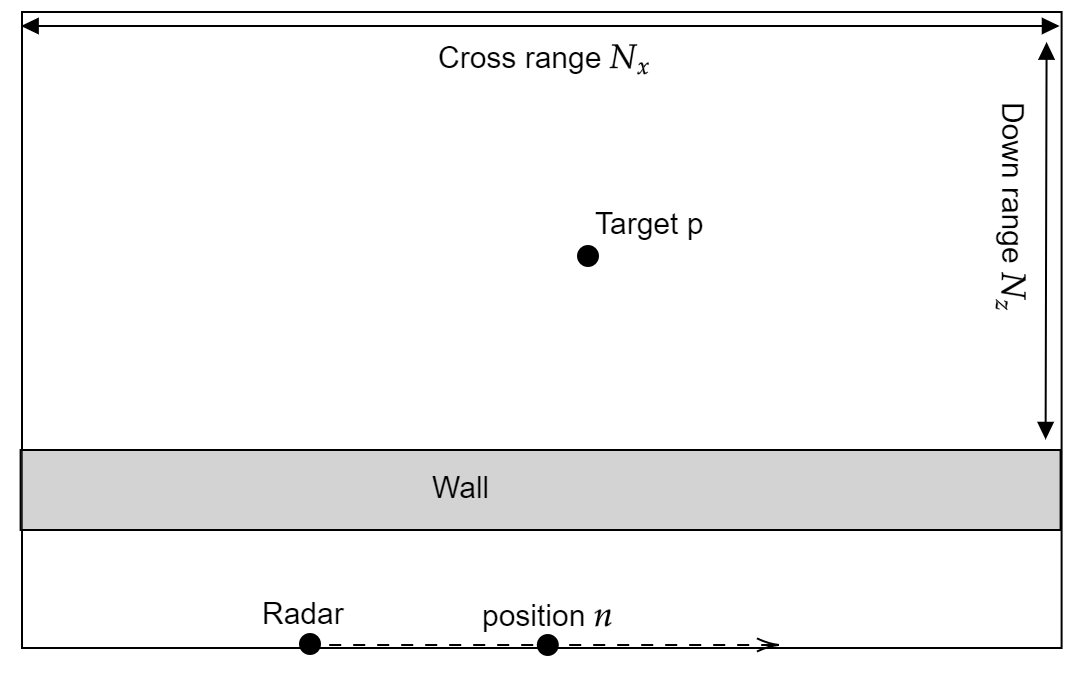}
	\caption{2D Through the Wall setting (view from above)}
	\label{fig:scene_desc}
\end{figure}

\begin{figure}[!h]
	\includegraphics[width = .48\linewidth]{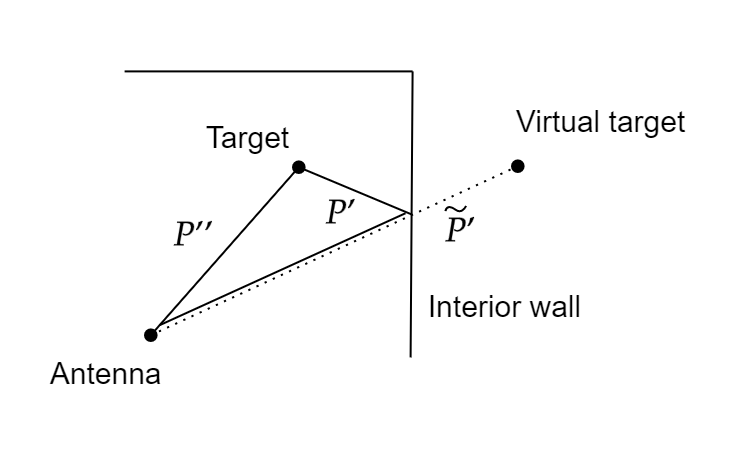}
	\includegraphics[width = .48\linewidth]{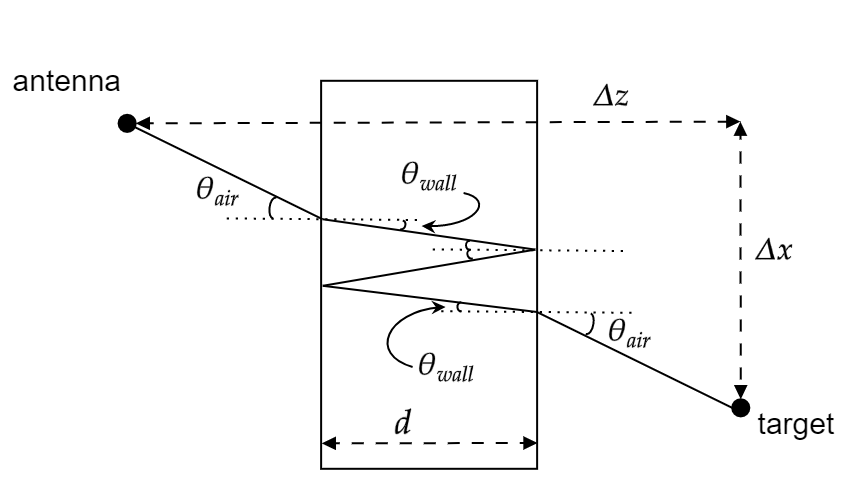}
	\caption{Multipath propagation via reflection at an interior wall (left) and via reverberation i.e. "wall ringing" (right)}
	\label{fig:examples_multipaths}
\end{figure}

We first present a signal model widely used in the TWRI literature \cite{CS_TTW_Radar_Imaging,tang2020}. Consider a 2D scene, as in Figure \ref{fig:scene_desc}, with homogeneous wall of thickness $d$ and permittivity $\epsilon$ located along the $x$-axis. We obtain $N$ measurements over a synthetic array parallel to the wall at a standoff distance $z_{off}$, with a stepped-frequency radar signal of $M$ frequencies uniformly spaced over some frequency band $[\omega_1, \omega_{M}]$ so that:  $\omega_m = \omega_1 + m \Delta \omega , \quad m = 1,\ldots, M$ with $\Delta\omega$ the frequency step. Measurements are in a stop-and-go setting. 

The noiseless received signal can be written as the superposition of the returns from the front-wall and targets. This leaves out from the model echoes stemming from the inner walls which may form clutter in real-world scenarios.
\review{
For the $m^{th}$ frequency and $n^{th}$ position, we have:
\begin{equation}\label{eq:received_signal_mp}
		y(m,n) = \sum_{k=1}^{K} \sigma_w^{(k)} \exp{(-j \omega_m \tau_w^{(k)})} 
		+ \sum_{i=1}^{R} \sum_{p=1}^{P} \sigma_p^{(i)} \exp(-j \omega_m \tau_{p,n}^{(i)})
\end{equation}
where $P$ is the number of point targets in the scene, considered to be low w.r.t. the scene dimensions, $K$ is the number of reverberations in the front wall, while $R$ is the number of possible multipaths, as the propagation through the front wall and the lateral/back walls induce several possible paths as presented in Figure \ref{fig:examples_multipaths}.
}

\review{
Moreover, $\sigma_w^{(k)}$ is a complex-valued attenuation coefficient comprising the reflectivity of the wall and the path loss for the $k^{th}$ wall reverberation, $\tau_w^{(k)}$ is the round trip propagation delay from transceiver to wall. }

Additionally,  $\sigma_p^{(i)}$ is a complex-valued attenuation coefficient which factors in the different losses for the $i^{th}$ multipath to the $p^{th}$ target: the wall refraction loss, path loss in air and wall, and target reflection loss. The two-way propagation delay from $n^{th}$ transceiver to the $p^{th}$ target along the $i^{th}$ multipath is denoted $ \tau_{p,n}^{(i)}$. The direct trajectory through the front wall can be computed by numerical methods as in \cite{ahmad_2006}, which allows us to evaluate the associated propagation delay.

For numerical evaluation and implementation, we discretize the scene into a grid of dimension $N_x \times  N_z$ in crossrange vs downrange. We now denote $ \tau^{(i)}_{n_x n_z,n}$ the propagation delay to the $(n_x,n_z)^{th}$ pixel for the $i^{th}$ multipath scheme and the $n^{th}$ transceiver position. Then, we can write the received signal through a dictionary $\mathbf{\Psi}$ which maps the whole scene. For the $i^{th}$ multipath scheme and the $n^{th}$ transceiver position, its $(n_x,n_z)^{th}$ column describes the return from a point target at the $(n_x,n_z)^{th}$ pixel:
\begin{equation}
	[\mathbf{\Psi}_n^{(i)}]_{n_x n_z} = [\exp{(-j \omega_0 \tau^{(i)}_{n_x n_z,n})} \ldots \exp{(-j \omega_{M-1} \tau^{(i)}_{n_x n_z,n})} ]^T
\end{equation}
Then, $\mathbf{\Psi}_n \in \mathbb{C}^{M \times N_x N_z R}$ is the part of the overall dictionary mapping from received signal to target positions from the $n^{th}$ transceiver position. This allows to write in vector form  the signal received at the $n^{th}$ position:
\begin{equation}
	\begin{split}
		\mathbf{y}_n &= 
		\mathbf{l}_n + 
		\underbrace{[ \mathbf{\Psi}_n^{(1)} \mathbf{\Psi}_n^{(2)} \ldots \mathbf{\Psi}_n^{(R)}]}_{= \mathbf{\Psi}_n}
		\underbrace{\begin{bmatrix}
				\mathbf{r}^{(1)} \\
				\mathbf{r}^{(2)}\\
				\vdots \\
				\mathbf{r}^{(R)}
		\end{bmatrix}}_{=\mathbf{r}} \\
		\implies \mathbf{y}_n  &=  \mathbf{l}_n  +  \mathbf{\Psi}_n \mathbf{r}
	\end{split}
\end{equation}
where $\mathbf{l}_n  \in \mathbb{C}^{M}$ contains the returns of the front wall and $\mathbf{r}^{(i)} \in \mathbb{C}^{N_x N_z}$ is the scene vector associated to the $i^{th}$ multipath propagation scheme containing the back-scattered signal complex amplitudes over the grid covering the scene.

In this setting, the returns of the front wall are overwhelming. It is needed to filter them out in order to achieve the detection of targets. The most convenient and common technique is to separate the target and wall subspaces to mitigate the contribution to the measurements of the wall. 
Indeed, the radar displacement axis being chosen to be parallel to the front wall induces an invariance of the front wall returns along the different measurement positions. Coupled with their higher power compared with the scene behind, this calls for the use of a subspace decomposition.
The filtering of the front wall followed by the detection of targets can be stated as a two-step method which we detail in the next section.

\subsection{SR-CS: vectorized overall model and two-step methods}

The method of \citet{CS_TTW_Radar_Imaging}, which we denote SR-CS (for Sparse Recovery - Compressed Sensing) considers a vectorized model where the total signal model is created by stacking the measurements at the $N$ radar positions in a long composite vector. 

Indeed, let $\mathbf{y} \triangleq [\mathbf{y}_{1}^T \ldots \mathbf{y}_{N}^T]^T$ and $\mathbf{\Psi}_{A} \triangleq [\mathbf{\Psi}_1^T \ldots \mathbf{\Psi}_{N}^T]^T$ so that: 
\begin{equation}
\mathbf{y}  = [\mathbf{l}_1^T,\ldots,\mathbf{l}_{N}^T]^T + \mathbf{\Psi}_{A} \mathbf{r}
\end{equation}
SR-CS assumes that the front wall returns have been suppressed (see e.g. \cite{svd_clutter_mitig,subspace_proj}) so that $\mathbf{l}_n = \mathbf{0} \; \forall n=1, 2\ldots, N$ and $\mathbf{y} = \mathbf{\Psi}_{A} \mathbf{r}$.


Assuming that the number of targets is relatively low w.r.t. the scene dimensions, the vector of amplitudes $\mathbf{r}$ is sparse. The recovery of $\mathbf{r}$  with a sparsity regularization consists in a renowned problem of sparse recovery, the most famous example being the LASSO regression \cite{Tibshirani1996_lasso}, which uses a $\ell_1$ norm regularization. The use of the $\ell_{2,1}$ norm, a regularization used in order to promote grouped sparsity across rows \cite{kowalski_mixed_norms}, has been developed for multipath exploitation in \cite{leigsnering2014multipath}. It is defined as the sum of the euclidian norm of the rows of a matrix. Indeed, in our case, rows represent one pixel viewed across different multipaths, we then want our method to promote activation of whole rows,  as the underlying scene is the same across multipaths. This can filter out multipath ghosts which appear at some position in an unstructured way, i.e. not across all multipaths, on the contrary of true targets. 

\subsection{KRPCA: matricized overall model and one-step methods}
\subsubsection{Data Model}
The approach in Section \ref{sec:existing_model} is a sequential method in two steps : \textit{a}) filter the front wall, \textit{b}) recover the target positions. Recent works \cite{tang2020, brehier2022robust} suggest that a parallel recovery of both components can improve performances. 

This can be considered through a decomposition of the data matrix, more precisely low rank plus sparse decomposition methods. This was notably developed in the framework of Robust PCA (RPCA) \cite{candes2011rpca,chandrasekaran2011rpca} whose goal is to retrieve a low-dimensional subspace in which lie the data points, except for some outliers which are accounted for in a sparse matrix. It makes use of the $\ell_1$ and nuclear norms for convex relaxation, known to be the convex envelopes of the $\ell_0$ `norm' (the number of non-zeros entries) and rank of a (bounded) matrix \cite{fazel2002matrix}.
In \cite{Mardani_2013}, it was extended to a setting with a compressing operator acting on the sparse component. 
However, we may observe that our model is a special case of the aforementioned method. Indeed, note that we can write an overall matricized model for the observations:
\begin{equation} \label{eq:decomp_signal_model_mp}
	\begin{split}
		\underbrace{\left[ \mathbf{y}_1 \ldots \mathbf{y}_{N} \right]}_{= \mathbf{Y}} &=
		\underbrace{\left[ \mathbf{l}_1 \ldots \mathbf{l}_{N} \right]}_{= \mathbf{L}} +
		\underbrace{ \left[ \mathbf{\Psi}_1 \ldots \mathbf{\Psi}_{N} \right]}_{= \mathbf{\Psi}}
		\underbrace{ \begin{bmatrix}
				\mathbf{r} &  \mathbf{0}  & \ldots & \mathbf{0} \\ 
				\mathbf{0} & \ddots & \ddots  &\vdots \\
				\vdots & \ddots  & \ddots & \mathbf{0} \\ 
				\mathbf{0} & \ldots &  \mathbf{0} & \mathbf{r}
		\end{bmatrix}}_{= \mathbf{I}_N \otimes \mathbf{r}} \\
		\implies \mathbf{Y} &= \mathbf{L} + \mathbf{\Psi} \left( \mathbf{I}_N \otimes  \mathbf{r} \right)
	\end{split}
\end{equation}
with $\otimes$ denoting the Kronecker product. $\mathbf{Y} \in \mathbb{C}^{M \times N}$ the data matrix, $\mathbf{L} \in \mathbb{C}^{M \times N}$ a low-rank matrix of front wall returns, $\mathbf{\Psi} \in \mathbb{C}^{M \times N_x N_z R N }$ a dictionary mapping to the target returns and $\mathbf{S} \in \mathbb{C}^{N_x N_z R N \times N }$ the associated sparse matrix containing the scene vector.

\subsubsection{Problem statement}

Some works of low-rank plus sparse matrix decomposition exist in the context of TWRI \cite{tang2016, tang2020}. Additionally, the work of \cite{brehier2022robust}, denoted KRPCA (for Kronecker-structured RPCA), proposed the following formulation to refine the model of \cite{Mardani_2013}:
\begin{equation} \label{eq:modified_dico_rpca}
	\begin{split}
		\min_{\mathbf{L},\mathbf{r}} &\quad \norm{\mathbf{L}}_* + \lambda \norm{ \vect^{-1} (\mathbf{r})}_{2,1} \\
		\text{s.t.} &\quad  \mathbf{Y} = \mathbf{L} +  \mathbf{\Psi} \left( \mathbf{I}_N \otimes \mathbf{r} \right)
	\end{split}
\end{equation}
In the following, we define $\mathbf{R} = \vect^{-1} (\mathbf{r})$ for ease of notation with $ {\rm vec^{-1}}(\mathbf{r}) \triangleq [\mathbf{r}^{(0)} \mathbf{r}^{(1)} \ldots \mathbf{r}^{(R-1)}] \in \mathbb{C}^{N_x N_z \times  R}$ so that ${\rm vec^{-1}}( \vect (\mathbf{R})) = \mathbf{R}$.

The resolution of KRPCA can be tackled via the Alternating Direction Method of Multipliers (ADMM) \cite{admm}. 
The Augmented Lagrangian associated to \eqref{eq:modified_dico_rpca} is:
\begin{equation} \label{eq:modified_dico_rpca_alm}  
	\begin{split}
		l(\mathbf{L},\mathbf{r},\mathbf{U}) =  \norm{\mathbf{L}}_* + \lambda \norm{\mathbf{R}}_{2,1}
		&+ \Re \left< \mathbf{U}, \mathbf{Y} - \mathbf{L} -  \mathbf{\Psi}( \mathbf{I}_N \otimes \mathbf{r})  \right> \\
		&+ \frac{\mu}{2} \norm{ \mathbf{Y} - \mathbf{L} - \mathbf{\Psi} (\mathbf{I}_N \otimes \mathbf{r})}_F^2
	\end{split}
\end{equation}
where $\mathbf{U}$ is the matrix of Lagrange multipliers, $\lambda$ is the sparsity regularization parameter and $\mu$ is the augmented Lagrangian penalty parameter.

The resolution for the three variables can be summarized as:
\begin{itemize}
    \item The subproblem for $\mathbf{L}$ is obtained via the soft thresholding operator on singular values, that is the proximal (see e.g. \cite{parikh2014proximal} for a comprehensive review) of the nuclear norm (with threshold $\lambda$), denoted $D_\lambda$. 
    It thus consists of the so-called $\ell_1$-norm proximal,the soft-thresholding operator $S_\lambda$, applied on the singular values of a matrix. Recall that $S_\lambda$ is defined element by element as: $[S_{\lambda}(\mathbf{A})]_{i,j}  = \sgn(a_{ij})  \left( |a_{ij}| - \lambda \right)_+$ where $\sgn(x) = x/ |x| \text{ if } x \neq 0, \text{ else }\sgn(x) = 0 \text{ for } x \in \mathbb{C}$ is the complex sign function and $(x)_+ = \max(x,0) \text{ for } x \in \mathbb{R}$. 
    Also recall the Singular Value Decomposition (SVD) denoted by $\mathbf{A} \overset{\rm{SVD}}{=} \mathbf{U} \mathbf{\Sigma} \mathbf{V}^H$, so that: 
    \begin{equation} \label{eq:svd_thres}
    \begin{split}
        D_\lambda(\mathbf{A}) =  \mathbf{U} S_\lambda(\mathbf{\Sigma}) \mathbf{V}^H
    \end{split}
    \end{equation}
    
    \item The subproblem for $\mathbf{r}$ is not solvable via a similar proximal evaluation. However, we may use proximal gradient descent (PGD) \cite{parikh2014proximal} since the objective function of this step is a sum of two convex terms with one being non-smooth. We can use a fixed step-size which is readily computed via the Hessian of the derivable part of the objective function.
    The proximal operator of  $\ell_{2,1}$-norm (with threshold $\lambda$), denoted $T_\lambda$, operating row by row over $\mathbf{A}$, is defined for the $i^{th}$ row $\mathbf{A}_{i:}$ as:
    \begin{equation} \label{eq:row_thr}
    [T_{\lambda}(\mathbf{A})]_{i:} = \left( 1- \frac{\lambda}{\norm{\mathbf{A}_{i:}}} \right)_+ \mathbf{A}_{i:}
    \end{equation}
    In fact, it is the proximal of the classical $\ell_2$-norm applied on a given row, as proximals are separable over sums.

    \item Finally, the subproblem for $\mathbf{U}$ is a standard ADMM step of dual ascent. The interesting point is that the step-size is already known: it is the parameter of the Augmented Lagrangian.
    
\end{itemize}

The method is summarized in Algorithm \ref{alg:krpca}.

\alglanguage{pseudocode}
\begin{algorithm}[!h]
    \caption{Algorithm for KRPCA}
    \label{alg:krpca}
    \begin{algorithmic}[1]
		
    \State $\text{Have: } \{\mathbf{y}_i\}_{i=1}^{N},\{\mathbf{\Psi}_i\}_{i=1}^{N}$
    \State $\text{Choose: }\lambda, \mu$

    \State$ \mathbf{Y} \triangleq [\mathbf{y}_1, \mathbf{y}_2,  \ldots \mathbf{y}_{N}]$
    \State$ \mathbf{\Psi} \triangleq [\mathbf{\Psi}_1, \mathbf{\Psi}_2,  \ldots \mathbf{\Psi}_{N}]$
    \State $\mathbf{\Psi}_{A} \triangleq [\mathbf{\Psi}_1^T  \mathbf{\Psi}_2^T  \ldots \mathbf{\Psi}_{N}^T]^T$

    \State $\mathbf{P} = {\mathbf{\Psi}_{A}}^H \mathbf{\Psi}_{A}$
    \State $ t = 1 / \lambda_{\text{max}}(\mu \mathbf{P})$

    \State $\text{Initialize: } \mathbf{L},\mathbf{R},\mathbf{U}$ 

    \vspace{2mm}
    
    \Repeat \:
    
    \State  $\mathbf{L} =  D_{1/\mu} (\mathbf{Y}- \mathbf{\Psi}  (\mathbf{I}_N \otimes \mathbf{r})+ \mu^{-1}\mathbf{U})$
    
    \State $\mathbf{\Gamma} = \mathbf{L}-\mathbf{Y}-\mu^{-1}\mathbf{U}$
    \State  $\mathbf{n} = {\mathbf{\Psi}_{A}}^H \vect{\mathbf{\Gamma}}$
    
    \Repeat :
    \State $\mathbf{R} =T_{\lambda t} ({\rm vec^{-1}} (\mathbf{r} -  t \mu(\mathbf{n} + \mathbf{P}\mathbf{r})))$
    \State $\mathbf{r} = \vect{(\mathbf{R})}$
    \Until{stopping criterion is met}

    \State $\mathbf{U}=  \mathbf{U}+ \mu (\mathbf{Y} - \mathbf{L}-  \mathbf{\Psi} (\mathbf{I}_N \otimes \mathbf{r}))$

    \Until{stopping criterion is met}
  
\end{algorithmic}
\end{algorithm}

\subsubsection{Convergence analysis}
The convergence of the algorithm is assured by the theory surrounding ADMM. In our case, both functions in the objective functions are proper closed convex functions. Assuming the (non-augmented) Lagrangian has a saddle-point, this ADMM algorithm for KRPCA guarantees residual convergence, objective convergence and dual convergence \cite{admm}. 

\subsubsection{Computational complexity}

The computational complexity of the derived algorithm for KRPCA is dependant on some assumptions on the order of the dimensions considered. 
Assume that $MN > D > M > N$ where $D=N_x N_z R$ is the  discretized scene grid size for all multipaths (and recall that $M,N$ are respectively the number of frequencies and radar snapshots).
Under those assumptions, the major cost of the overall algorithm is the computation of $\mathbf{P}$ during the initialization, which is of complexity $\mathcal{O}(MND^2)$.
In the case of repeated calls of KRPCA (e.g. for Monte Carlo simulations), we can look only at the cost of the inner loop, considering $\mathbf{P}$ as cached. 
Denoting that $\mathbf{\Psi} (\mathbf{I}_N \otimes \mathbf{r}) = \vect^{-1}(\Psi_A \mathbf{r})$, this operation can be seen to be of complexity $\mathcal{O}(MND)$. The PGD step has complexity $\mathcal{O}(KD^2)$. We assume that $K=1$ or relatively small, which is respectable in practice, otherwise the ordering of the different dimensions becomes too tight to make general statements. We then conclude that the inner loop has complexity $\mathcal{O}(MND)$. 

\section{HKRPCA : handling outliers via a robust distance in low-rank plus sparse decomposition methods}
\label{sec:robust_krpca}

The performance of KRPCA and other methods using a least squares data closeness term is susceptible to heterogeneous noise or outliers that may well appear in the context of TWRI. Indeed, as described in \cite{Ollila2012_ces_review}, most radar clutter types can be described as heterogeneous. For example, in the context of TWRI, a drywall will not have homogeneous returns in power across measurement positions. Moreover, the wall characteristics (permittivity and conductivity) may be dependant on frequency i.e. the wall is dispersive \cite{amin_TTW_book}. 

\subsection{Problem statement}

In order to alleviate the potential problems in estimation caused by heterogeneous noise or outliers, we set out to include a robust distance \cite{maronna2019robust} in our problem formulation to model the data closeness.

This leads us to define the following optimization problem, which we call HKRPCA (for Huber-type KRPCA):
\begin{equation} \label{eq:unconstr_robust_rpca}
	\min_{\mathbf{L},\mathbf{R}} \quad \norm{\mathbf{L}}_* + \lambda \norm{\mathbf{R}}_{2,1} +  \frac{\mu}{2} \sum_{p_i \in \mathcal{P}}  H_c ( \norm{[\mathbf{Y} - \mathbf{L} - \mathbf{\Psi} (\mathbf{I}_N \otimes \vect(\mathbf{R}))]_{p_i} }_F)
\end{equation}
with $ \mathcal{P}$ a \textbf{partition} of the entries of the residual matrix with $i^{th}$ element $p_i$ and $ H_c$ the renowned Huber loss function  \citet{huber1964} with threshold $c \in \mathbb{R}^+$, defined $\forall x \in \mathbb{R}$ as:
\begin{equation}
	H_c (x)  = \begin{cases} \frac{1}{2} x^2 &\mbox{if } |x| \leq c \\
		c (|x| - \frac{1}{2} c) & \mbox{if } |x| > c \end{cases}
\end{equation}
The rationale behind such a function is that outliers are higher contributors to the data closeness term than other points. Having a linear term in the loss specifically for them will lower their influence while the inliers will contribute to the loss via a quadratic term, similarly to classical least squares. 

The flexible block-wise partition of entries allows us to model the outliers shape as we see fit. 
\review{For example, if the wall materials are structured rather than homogeneous, the noise power may be variable by radar position, which induces a column-wise heterogeneity that can be taken into account in a column-wise partition.}


\subsection{ADMM algorithm with a semi-split of variables}

Handling the problem \eqref{eq:unconstr_robust_rpca} directly can be achieved by proximal gradient descent alternated on the two variables. However, a strategy to obtain closed form updates is to introduce auxiliary variables to decouple the terms of the objective function.
We introduce one auxiliary variable $\mathbf{M}=\mathbf{L}$ to decouple the nuclear norm from the Huber cost. We will see later that the split of $\mathbf{r}$ does not yield a similar proximal closed form. We consider the problem:
\begin{equation} \label{eq:semi_split_hkrpca}
\begin{split}
    \min_{\mathbf{L},\mathbf{R},\mathbf{M}} &\quad \norm{\mathbf{M}}_* + \lambda \norm{\mathbf{R}}_{2,1} +  \frac{\mu}{2} \sum_{p_i \in \mathcal{P}}  H_c ( \norm{[\mathbf{Y} - \mathbf{L} - \mathbf{\Psi} (\mathbf{I}_N \otimes \vect(\mathbf{R}))]_{p_i} }_F) \\ \text{s.t.} &\quad \mathbf{M}=\mathbf{L}
\end{split}
\end{equation}

This semi-splitting problem \eqref{eq:semi_split_hkrpca} can be tackled through the ADMM framework. The Augmented Lagrangian associated with \eqref{eq:semi_split_hkrpca} is:
\begin{equation} \label{eq:alm_slack_krpca}
\begin{split}
l(\mathbf{L},\mathbf{R},\mathbf{M},\mathbf{U}) = \norm{\mathbf{M}}_* + \lambda \norm{\mathbf{R}}_{2,1} + \Re\left< \mathbf{U}, \mathbf{M}-\mathbf{L} \right> + \frac{\nu}{2} \norm{\mathbf{M}-\mathbf{L}}^2_F  \\
+ \frac{\mu}{2} \sum_{p_i \in \mathcal{P}}  H_c ( \norm{[\mathbf{Y} - \mathbf{L} - \mathbf{\Psi} (\mathbf{I}_N \otimes \vect(\mathbf{R}))]_{p_i} }_F)
\end{split}
\end{equation}

As for KRPCA, the following subsections will detail the update of each variables for minimizing $l(\mathbf{L},\mathbf{R},\mathbf{M},\mathbf{U})$.

\subsubsection{$\mathbf{L}$-update}
For this variable, the minimization consists in finding:
\begin{equation} \label{eq:Lstep_problem}
\begin{split}
    \argmin_{\mathbf{L}} &\quad  \frac{\mu}{2} \sum_{p_i \in \mathcal{P}}  H_c ( \norm{[\mathbf{Y} - \mathbf{L} - \mathbf{\Psi} (\mathbf{I}_N \otimes \vect(\mathbf{R}))]_{p_i} }_F) +  \frac{\nu}{2} \norm{ \mathbf{M}-\mathbf{L} + \frac{1}{\nu} \mathbf{U}}^2_F
\end{split}
\end{equation}

The resulting update solving for \eqref{eq:Lstep_problem} is given in the following proposition.

\begin{proposition} \label{prop:L_step_decplg}
The solution is $\forall p_i \in \mathcal{P}$ : 
\begin{equation}
\begin{split}
	[\mathbf{L}]_{p_i}  &= \text{prox}_{  (\mu/ 2\nu)  H_c \circ \norm{\cdot}_F} \left( [\mathbf{M} + \frac{1}{\nu} \mathbf{U} -\mathbf{Y} + \mathbf{\Psi} (\mathbf{I}_N \otimes \vect(\mathbf{R})) ]_{p_i} \right) \\
	 &+[\mathbf{Y} - \mathbf{\Psi} (\mathbf{I}_N \otimes \vect(\mathbf{R})) ]_{p_i}
\end{split}
\end{equation}
with the proximal defined in the proof below (equations \eqref{eq:proxhuber} and \eqref{eq:proxcompnorm}).
\end{proposition}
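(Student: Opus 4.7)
The plan is to exploit the block-separable structure of problem \eqref{eq:Lstep_problem} and reduce each block-wise subproblem to a standard proximal evaluation, then invoke a closed form for the proximal of $H_c$ composed with a Frobenius norm.

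First, I would observe that both terms in \eqref{eq:Lstep_problem} are additive over the partition $\mathcal{P}$: the Huber penalty by construction, and the quadratic penalty because $\norm{\cdot}_F^2$ splits as a sum over any disjoint partition of the entries of the residual. Hence the minimization over $\mathbf{L}$ decouples across the blocks, and it suffices to treat one block $[\mathbf{L}]_{p_i}$ at a time.

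Next, within each block I would introduce the change of variable $\mathbf{E}_i = [\mathbf{Y} - \mathbf{L} - \mathbf{\Psi}(\mathbf{I}_N \otimes \vect(\mathbf{R}))]_{p_i}$, so that $[\mathbf{L}]_{p_i}$ and $\mathbf{E}_i$ differ only by a constant independent of $\mathbf{L}$. Expanding the Frobenius term and completing the square in $\mathbf{E}_i$ reduces each block-wise sub-problem to the canonical form
\begin{equation*}
\min_{\mathbf{E}_i} \; \frac{\mu}{2} H_c(\norm{\mathbf{E}_i}_F) + \frac{\nu}{2}\norm{\mathbf{E}_i - \mathbf{V}_i}_F^2,
\end{equation*}
where $\mathbf{V}_i = [\mathbf{Y} - \mathbf{\Psi}(\mathbf{I}_N \otimes \vect(\mathbf{R})) - \mathbf{M} - \tfrac{1}{\nu}\mathbf{U}]_{p_i}$. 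By definition, its minimizer is $\text{prox}_{(\mu/2\nu)\, H_c\circ\norm{\cdot}_F}(\mathbf{V}_i)$. Undoing the substitution gives $[\mathbf{L}]_{p_i} = [\mathbf{Y} - \mathbf{\Psi}(\mathbf{I}_N \otimes \vect(\mathbf{R}))]_{p_i} - \text{prox}(\mathbf{V}_i)$, which agrees with the expression in the proposition once we note that $H_c\circ\norm{\cdot}_F$ is even, so $\text{prox}(-\mathbf{V}) = -\text{prox}(\mathbf{V})$ absorbs the sign difference between $\mathbf{V}_i$ and the argument of the proximal as written in the statement.

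The main obstacle is deriving the closed form of $\text{prox}_{\alpha H_c\circ\norm{\cdot}_F}$ itself, which is what the referenced displays \eqref{eq:proxhuber} and \eqref{eq:proxcompnorm} will collect. Because the penalty depends on its argument only through $\norm{\cdot}_F$, the proximal is radial, of the form $\mathbf{V}\mapsto s(\norm{\mathbf{V}}_F)\,\mathbf{V}/\norm{\mathbf{V}}_F$, where $s$ coincides with the scalar proximal of $\alpha H_c$. Splitting into the quadratic regime $|x|\le c$ and the linear regime $|x|>c$ of $H_c$ and writing down the corresponding first-order optimality conditions yields an explicit piecewise shrinkage, which then plugs into the block-wise formula to complete the proof.
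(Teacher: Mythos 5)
Your proposal is correct and follows essentially the same route as the paper's proof: block-wise separability of \eqref{eq:Lstep_problem}, reduction by translation (your change of variables to $\mathbf{E}_i$) to the proximal of $H_c\circ\norm{\cdot}_F$, and evaluation of that proximal via its radial structure together with the scalar Huber proximal --- exactly the two facts the paper cites as \eqref{eq:proxhuber} and \eqref{eq:proxcompnorm} rather than re-deriving from optimality conditions as you suggest. Your handling of the sign via oddness of the proximal of an even radial function is also sound and correctly reconciles your $\mathbf{V}_i$ with the argument as written in the proposition.
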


\begin{proof}

The problem \eqref{eq:Lstep_problem} is separable in the blocks $\{ [\mathbf{L}]_{p_i} \}$:
\begin{equation}
\begin{split}
\min_{ \{  [\mathbf{L}]_{p_i} \} } \quad \frac{\mu}{2} &\sum_{p_i \in \mathcal{P}}  H_c ( \norm{[\mathbf{Y} - \mathbf{L} - \mathbf{\Psi} (\mathbf{I}_N \otimes \vect(\mathbf{R}))]_{p_i} }_F)  \\
+ \frac{\nu}{2} &\sum_{p_i \in \mathcal{P}}\norm{ [\mathbf{M}-\mathbf{L} + \frac{1}{\nu} \mathbf{U} ]_{p_i}  }^2_F
\end{split}
\end{equation}
By the separability property of proximals \cite{parikh2014proximal}, we can consider the proximal over each block separately:

\begin{equation} \label{eq:prox_huber_L}
	\min_{ [\mathbf{L}]_{p_i} } \quad \frac{\mu}{2}  H_c ( \norm{[\mathbf{Y} - \mathbf{L} - \mathbf{\Psi} (\mathbf{I}_N \otimes \vect(\mathbf{R}))]_{p_i} }_F) +  \frac{\nu}{2} \norm{ [\mathbf{M}-\mathbf{L} + \frac{1}{\nu} \mathbf{U} ]_{p_i}  }^2_F
\end{equation}
We then compute the proximal of $f(\mathbf{X}) = H_c(\norm{\mathbf{X}+ \mathbf{B}}_F)$ with $\mathbf{B}$ a constant term.
The proximal of the Huber function has a known form \cite{first_order_methods}:
\begin{equation} \label{eq:proxhuber}
	\text{prox}_{a H_c} (x) = \left( 1 - \frac{a}{ \max (|\frac{x}{c}|,a+1)} \right) x
\end{equation}
We can then leverage a theorem of norm composition \cite{first_order_methods} to get:
\begin{equation} \label{eq:proxcompnorm}
	\text{prox}_{a H_c \circ \norm{\cdot}_F} (\mathbf{X})  = \begin{cases}\text{prox}_{a H_c} (\norm{\mathbf{X}}_F) \cdot \frac{\mathbf{X}}{\norm{\mathbf{X}}_F} &\mbox{if } \mathbf{X} \neq \mathbf{0}\\
		\mathbf{0} & \mbox{if } \mathbf{X} = \mathbf{0} \end{cases}
\end{equation}
We finally use the translation properties of proximal operators, so that, $\forall p_i \in \mathcal{P}$, the update is: 
\begin{equation}
\begin{split}
	[\mathbf{L}]_{p_i}  &= \text{prox}_{ (\mu/ 2\nu)  H_c \circ \norm{\cdot}_F} \left( [\mathbf{M} + \frac{1}{\nu} \mathbf{U} -\mathbf{Y} + \mathbf{\Psi} (\mathbf{I}_N \otimes \vect(\mathbf{R})) ]_{p_i} \right) \\
	 &+[\mathbf{Y} - \mathbf{\Psi} (\mathbf{I}_N \otimes \vect(\mathbf{R})) ]_{p_i}
\end{split}
\end{equation}
\end{proof}
This gives a closed-form update for the $\mathbf{L}$-step. We will see later that the auxiliary variable $\mathbf{M}$ as well as the dual variable $\mathbf{U}$ also have closed-forms.

\subsubsection{$\mathbf{R}$-update: via PGD} \label{subsubsec:r_step}
The minimization problem over $\mathbf{R}$ is:
\begin{equation}
\min_{\mathbf{R}} \quad \frac{\mu}{2} \sum_{p_i \in \mathcal{P}}  H_c ( \norm{[\mathbf{Y} - \mathbf{L} - \mathbf{\Psi} (\mathbf{I}_N \otimes \vect(\mathbf{R}))]_{p_i} }_F) +  \lambda \norm{\mathbf{R}}_{2,1}
\end{equation}

It is possible to use proximal gradient descent (PGD) for the minimization over this variable. We will consider the vectorized variable $\mathbf{r}$ to compute the gradient and unvectorize the solution to apply the proximal.
At iteration $t+1$, with step-size $s$, we have : 
\begin{equation}    
\mathbf{R}_{t+1}  = T_{\lambda s} \left( \vect^{-1} \left( \mathbf{r}_t  -s  \frac{\mu}{2}\mathbf{g}_t \right) \right)
\end{equation}
where $T$ is the row thresholding operator i.e. the proximal of the $\ell_{2,1}$ norm (see Equation \eqref{eq:row_thr}) and $\mathbf{g}$ is the needed gradient of the sum of Huber functions. 

\begin{proposition}
The gradient $\mathbf{g}$ w.r.t. $\mathbf{r}$ is:
\begin{equation}
\mathbf{g} = -\sum_{p_i \in \mathcal{P}}\frac{H'_c ( \norm{[\mathbf{E}]_{p_i} }_F)}{ \norm{[\mathbf{E}]_{p_i} }_F} \left( \sum_{(j,k) \in p_i}  [\mathbf{E}]_{j,k} (\mathbf{\Psi}_{k})_{j,:}^H \right) 
\end{equation} 
where $ \mathbf{E} = \mathbf{Y} - \mathbf{L} - \mathbf{\Psi} (\mathbf{I}_N \otimes \mathbf{r})$ and $\mathbf{(\Psi}_{k})_{j,:}$ denotes the $j^{th}$ line of $\mathbf{\Psi}_{k}$.
\end{proposition}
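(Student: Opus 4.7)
The approach is to compute the Wirtinger derivative of the Huber cost with respect to $\mathbf{r}^*$, then convert it into the conjugate-gradient convention used by the PGD update in Subsection~\ref{subsubsec:r_step} (a factor of $2$, since the objective is real-valued while $\mathbf{r}$ is complex). Because the penalty is additive over blocks $p_i \in \mathcal{P}$, I would fix one block, differentiate that single Huber term, and then sum the contributions at the end.

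First, I would rewrite the residual entrywise: $[\mathbf{E}]_{j,k} = [\mathbf{Y}]_{j,k} - [\mathbf{L}]_{j,k} - (\mathbf{\Psi}_k)_{j,:}\,\mathbf{r}$, which is holomorphic in $\mathbf{r}$, so $\partial [\mathbf{E}]_{j,k}/\partial \mathbf{r}^* = \mathbf{0}$ while $\partial \overline{[\mathbf{E}]_{j,k}}/\partial \mathbf{r}^* = -(\mathbf{\Psi}_k)_{j,:}^H$. Expanding $\norm{[\mathbf{E}]_{p_i}}_F^2 = \sum_{(j,k)\in p_i} [\mathbf{E}]_{j,k}\,\overline{[\mathbf{E}]_{j,k}}$ and applying the product rule (only the conjugate factor contributes) yields
\begin{equation*}
\frac{\partial \norm{[\mathbf{E}]_{p_i}}_F^2}{\partial \mathbf{r}^*} \;=\; -\sum_{(j,k)\in p_i} [\mathbf{E}]_{j,k}\,(\mathbf{\Psi}_k)_{j,:}^H .
\end{equation*}

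Next, I would compose with the scalar map $t \mapsto H_c(\sqrt{t})$ and apply the chain rule; its derivative is $H_c'(\norm{[\mathbf{E}]_{p_i}}_F)/(2\norm{[\mathbf{E}]_{p_i}}_F)$. Summing over the partition and multiplying by $2$ to convert the Wirtinger derivative into the conjugate gradient used in PGD exactly cancels the $1/2$ coming from the square root. Since the constant $\mu/2$ is kept explicitly in front of $\mathbf{g}$ in the PGD update, factoring it out of the derivative produces precisely the expression announced in the proposition.

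The main subtlety will be bookkeeping around the complex-gradient convention: introducing the factor of $2$ once (to cancel the $1/2$) without combining it with the $\mu/2$ that is left outside of $\mathbf{g}$. A minor technical point is that the ratio $H_c'(x)/x$ appearing in the update remains well defined at $x=0$ because $H_c'(x)/x \to 1$ as $x \to 0$, so $H_c \circ \norm{\cdot}_F$ is continuously differentiable on the whole space and no separate case analysis is needed at blocks where $[\mathbf{E}]_{p_i}$ vanishes.
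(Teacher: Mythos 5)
Your proof is correct and takes essentially the same approach as the paper: Wirtinger differentiation with respect to $\mathbf{r}^*$ (with the conjugate-gradient factor of $2$ cancelling the $1/2$ from the square root), the chain rule reducing everything to the derivative of the squared Frobenius norm, and the same entrywise product-rule expansion yielding $-\sum_{(j,k)\in p_i}[\mathbf{E}]_{j,k}(\mathbf{\Psi}_{k})_{j,:}^H$ per block, summed over the partition. Your closing remark that $H_c'(x)/x \to 1$ as $x \to 0$, so the formula stays well defined on blocks with vanishing residual, is a small but worthwhile point that the paper's proof leaves implicit.
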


\begin{proof}
The gradient is computed accordingly to Wirtinger calculus, since we have an objective function of complex variables. Gradient descent in this setting is achieved with:
\begin{equation}
	\mathbf{g}  = 2  \frac{d}{d \mathbf{r}^*} \left(\sum_{p_i \in \mathcal{P}}  H_c ( \norm{[\mathbf{Y} - \mathbf{L} - \mathbf{\Psi} (\mathbf{I}_N \otimes \mathbf{r})]_{p_i} }_F) \right)
\end{equation}
 Using the chain rule, we get:
\begin{equation}
	\begin{split}
		\mathbf{g} = 2  \frac{d}{d \mathbf{r}^*} \sum_{p_i \in \mathcal{P}}  H_c ( \norm{[\mathbf{E}]_{p_i} }_F) = \sum_{p_i \in \mathcal{P}}   \frac{H'_c ( \norm{[\mathbf{E}]_{p_i} }_F)}{\norm{[\mathbf{E}]_{p_i} }_F} \cdot \frac{d}{d \mathbf{r}^*} \norm{[\mathbf{E}]_{p_i} }^2_F
	\end{split}
\end{equation}
with the derivative of $H_c$  being:
\begin{equation}
	H'_c (x)  = \begin{cases} x &\mbox{if } |x| \leq c \\
		c \sgn(x) &\mbox{if } |x| > c \end{cases}
\end{equation}
where $\sgn$ denotes the sign function. Finally, we compute:
\begin{equation}
\begin{split}
    \frac{d}{d \mathbf{r}^*} \norm{[\mathbf{E}]_{p_i} }^2_F &=  \sum_{(j,k) \in p_i}   \frac{d}{d \mathbf{r}^*} \left| [\mathbf{Y}]_{j,k} - [\mathbf{L}]_{j,k} - (\mathbf{\Psi}_{k})_{j,:} \mathbf{r} \right|^2 \\
    &=  \sum_{(j,k) \in p_i}  -([\mathbf{Y}]_{j,k} - [\mathbf{L}]_{j,k} - (\mathbf{\Psi}_{k})_{j,:} \mathbf{r}) (\mathbf{\Psi}_{k})_{j,:}^H
\end{split}
\end{equation}
where $(\mathbf{\Psi}_{k})_{j,:} \mathbf{r}$ is a scalar as $\mathbf{(\Psi}_{k})_{j,:}$ denotes the $j^{th}$ line of $\mathbf{\Psi}_{k}$.
Then: 
\begin{equation}
    \mathbf{g} =- \sum_{p_i \in \mathcal{P}}\frac{H'_c ( \norm{[\mathbf{E}]_{p_i} }_F)}{ \norm{[\mathbf{E}]_{p_i} }_F} \left( \sum_{(j,k) \in p_i}  [\mathbf{E}]_{j,k} (\mathbf{\Psi}_{k})_{j,:}^H \right) 
\end{equation} 
\end{proof}
The step-size can be found by backtracking line-search (via Armijo's rule) which consists in iteratively  shrinking an initialy large step-size until sufficient decrease has been achieved. In practice,the step-size does not vary over iterations so that it can be fixed to one precomputed value (linked to the Lipschitz constant of the gradient above).

The gradient $\mathbf{g}$ may be compactly written for faster implementation:
\begin{equation}
\mathbf{g} = - \mathbf{\Psi}_g \text{bdiag} (\mathbf{e}_g) \mathbf{h}_g =  - \mathbf{\Psi}_g (\mathbf{e}_g \odot (\mathbf{h}_g \otimes \mathbf{1}))
\end{equation}
where $\mathbf{1}$ is a vector of ones and $\odot$ denotes the Hadamard product. The operator
$\text{bdiag}$ assigns a block diagonal matrix to a composite vector,
$\mathbf{\Psi}_g$ collects the dictionary vectors in the innermost sum,
$\mathbf{e}_g$ the associated residues,
and $\mathbf{h}_g$ the fraction of norms in the outermost sum.
Note that $\mathbf{e}_g \odot (\mathbf{h}_g \otimes \mathbf{1})$ is faster to compute than $\text{bdiag} (\mathbf{e}_g) \mathbf{h}_g $ as it avoids summing over the zeros of the block-diagonal matrix.

\subsubsection{$\mathbf{M}$-update}
Thanks to the variable split, the update $\mathbf{M}$ appears as a classical proximal problem with closed form solution. Indeed, after completing the squared norm, the problem of solving \eqref{eq:alm_slack_krpca} over $\mathbf{M}$ consists in finding:
\begin{equation} 
\argmin_{\mathbf{M}} \quad \norm{\mathbf{M}}_*  + \frac{\nu}{2} \norm{\mathbf{M}-\mathbf{L} + \frac{1}{\nu} \mathbf{U}}^2_F
\end{equation}
which is a proximal of the nuclear norm. Thus :
\begin{equation} 
	\mathbf{M} = D_{ 1 / \nu} (\mathbf{L} - \frac{1}{\nu} \mathbf{U} )
\end{equation}
where $D$ is the singular value thresholding operator (see Equation \eqref{eq:svd_thres}) .

\subsubsection{$\mathbf{U}$-update}
Finally, the $\mathbf{U}$ update is a standard step of ADMM, the dual ascent step: 
\begin{equation} 
\mathbf{U} = \mathbf{U} + \nu ( \mathbf{M}-\mathbf{L} )
\end{equation}

The method is summarized in Algorithm \ref{alg:hkrpca_semisplit}.

\alglanguage{pseudocode}
\begin{algorithm}[!h]
    \caption{Algorithm for HKRPCA (semi variable splitting)}
    \label{alg:hkrpca_semisplit}
    \begin{algorithmic}[1]
    \State $\text{Have: } \{\mathbf{y}_i\}_{i=1}^{N},\{\mathbf{\Psi}_i\}_{i=1}^{N}$
    \State $\text{Choose: }\lambda, \mu, \nu, \eta , c, t \text{ and } \mathcal{P}$

    \State$ \mathbf{Y} \triangleq [\mathbf{y}_1, \mathbf{y}_2,  \ldots \mathbf{y}_{N}]$
    \State$ \mathbf{\Psi} \triangleq [\mathbf{\Psi}_1, \mathbf{\Psi}_2,  \ldots \mathbf{\Psi}_{N}]$
    \State$ \mathbf{\Psi}_{A} \triangleq [\mathbf{\Psi}_1^T  \mathbf{\Psi}_2^T  \ldots \mathbf{\Psi}_{N}^T]^T$
    
    \State $\text{Initialize: } \mathbf{L},\mathbf{R},\mathbf{M},\mathbf{U}$ 
    
    \vspace{2mm}
    
    \Repeat :
    \State $[\mathbf{L}]_{p_i} = \text{prox}_{ (\mu / 2\nu)  H_c \circ \norm{\cdot}_F} \left( [\mathbf{M} + \frac{1}{\nu} \mathbf{U} -\mathbf{Y} + \mathbf{\Psi} (\mathbf{I}_N \otimes \vect(\mathbf{R})) ]_{p_i} \right) \newline 
    \hspace*{3.5em} + [\mathbf{Y} - \mathbf{\Psi} (\mathbf{I}_N \otimes \vect(\mathbf{R})) ]_{p_i}  \quad \forall p_i \in \mathcal{P}$
    
    \Repeat :
        \State $\mathbf{E}= \mathbf{Y} -  \mathbf{L} - \mathbf{\Psi} (\mathbf{I}_N \otimes \vect(\mathbf{R}))$
        
        \State $\mathbf{G} =  - \vect^{-1} \left( \sum_{p_i \in \mathcal{P}}\frac{H'_c ( \norm{[\mathbf{E}]_{p_i} }_F)}{ \norm{[\mathbf{E}]_{p_i} }_F} \left( \sum_{(j,k) \in p_i}  [\mathbf{E}]_{j,k} (\mathbf{\Psi}_{k})_{j,:}^H \right) \right)$
        
        \State $\mathbf{R} = T_{\lambda s} \left( \mathbf{R} - s \frac{\mu}{2} \mathbf{G} \right)$
   
    \Until{stopping criterion is met}
   
    \State $\mathbf{M}= D_{ 1 / \nu} (\mathbf{L} - \frac{1}{\nu} \mathbf{U})$
    \State $\mathbf{U}= \mathbf{U} + \nu (\mathbf{M}-\mathbf{L})$

    \Until{stopping criterion is met}
    
    \vspace{2mm}
    		
\end{algorithmic}
\end{algorithm}

\subsection{ADMM algorithm with full variable splitting}

The update for $\mathbf{r}$ via PGD is not the only option, we may avoid the use of an unknown step-size tuned via linesearch by splitting the variable similarly to $\mathbf{L}$ to decouple the terms in appears in. If we take this route, the formulation is:
\begin{equation} \label{eq:full_split_hkrpca}
\begin{split}
    \min_{\mathbf{L},\mathbf{R},\mathbf{M},\mathbf{S}} &\quad \norm{\mathbf{M}}_* + \lambda \norm{\mathbf{S}}_{2,1} +  \frac{\mu}{2} \sum_{p_i \in \mathcal{P}}  H_c ( \norm{[\mathbf{Y} - \mathbf{L} - \mathbf{\Psi} (\mathbf{I}_N \otimes \vect(\mathbf{R}))]_{p_i} }_F) \\ \text{s.t.} &\quad \mathbf{M}=\mathbf{L} , \hspace{1em} \mathbf{S}=\mathbf{R}
\end{split}
\end{equation}
This full variable splitting problem \eqref{eq:full_split_hkrpca} can be tackled through the ADMM framework. The Augmented Lagrangian associated with \eqref{eq:full_split_hkrpca} is:
\begin{equation}
\begin{split}
l(\mathbf{L},\mathbf{R},\mathbf{M},\mathbf{S},\mathbf{U},\mathbf{V}) &= \norm{\mathbf{M}}_* + \lambda \norm{\mathbf{S}}_{2,1}
\Re\left< \mathbf{U}, \mathbf{M}-\mathbf{L} \right> + \frac{\nu}{2} \norm{\mathbf{M}-\mathbf{L}}^2_F \\
&+ \Re\left< \mathbf{V}, \mathbf{S}-\mathbf{R} \right> + \frac{\eta}{2} \norm{\mathbf{S}-\mathbf{R}}^2_F \\
&+ \frac{\mu}{2} \sum_{p_i \in \mathcal{P}}  H_c ( \norm{[\mathbf{Y} - \mathbf{L} - \mathbf{\Psi} (\mathbf{I}_N \otimes \vect(\mathbf{R}))]_{p_i} }_F) \\
\end{split}
\end{equation}

\subsubsection{$\mathbf{L}, \mathbf{M}, \mathbf{U}$-updates}
The $\mathbf{L}$, $\mathbf{M}$ and $\mathbf{U}$ updates do not change from the semi variable splitting method. Indeed, the major difference is in the $\mathbf{r}$ update.

\subsubsection{$\mathbf{R}$-update via MM}
The objective function is in this case:
\begin{equation} \label{eq:r_step_objfct_fulldecplg}
\min_{\mathbf{R}} \quad \frac{\mu}{2} \sum_{p_i \in \mathcal{P}} H_c ( \norm{[\mathbf{Y} - \mathbf{L} - \mathbf{\Psi} (\mathbf{I}_N \otimes \vect(\mathbf{R}))]_{p_i} }_F) + \frac{\eta}{2} \norm{ \mathbf{S}-\mathbf{R} + \frac{1}{\eta} \mathbf{V}}^2_F 
\end{equation}  

Via decoupling, we cannot find a similar closed-form proximal evaluation for $\mathbf{r}$ as for $\mathbf{L}$ in Proposition \ref{prop:L_step_decplg}.
Indeed, the sum of Huber functions is not separable over $\mathbf{r}$.
Instead, we will show that the Majorization-Minimization (MM) framework \cite{reviewMM} gives us a way to solve for this subproblem iteratively.
The MM framework consists in finding a local majorizing surrogate, minimizing it and iterating those steps.

\begin{proposition}
A MM scheme can be tailored which converges to a critical point of \eqref{eq:r_step_objfct_fulldecplg}, with iteration $t+1$:
\begin{equation}
\begin{split}
\mathbf{r}_{t+1} = &\left(\frac{\mu}{2} \mathbf{\Psi}_{AW(\mathbf{r}_t)}^H \mathbf{\Psi}_{AW(\mathbf{r}_t)} + \eta \mathbf{I} \right)^{-1} \times \\
&\left( \frac{\mu}{2} \mathbf{\Psi}_{AW(\mathbf{r}_t)}^H (\vect\mathbf{Y}_{W(\mathbf{r}_t)} - \vect\mathbf{L}_{W(\mathbf{r}_t)}) + (\eta \vect\mathbf{S} + \vect \mathbf{V}) \right)
\end{split}
\end{equation}
with $\mathbf{W}$ depending on $\mathbf{r}_t$, which we drop from notations below.
We have $\mathbf{L}_W = \mathbf{W} \odot \mathbf{L}$, $\mathbf{Y}_W = \mathbf{W} \odot \mathbf{Y}$, $\mathbf{\Psi}_{AW} = \vect(\mathbf{W})\mathbf{1}^T \odot \mathbf{\Psi}_{A}$ and $\mathbf{W}$ is defined by $[\mathbf{W}]_{j,k}= w_i(\mathbf{r}_t)$ where the $(j,k)^{th}$ entry is in the $i^{th}$ patch, with $w_i^2(\mathbf{r}_t)  = 1$ if $ e_i(\mathbf{r}_t) \leq c \text{ or else } w_i^2(\mathbf{r}_t) = \frac{c}{e_i(\mathbf{r}_t)}$ where $e_i(\mathbf{r}_t) =  \norm{[\mathbf{Y} - \mathbf{L} - \mathbf{\Psi} (\mathbf{I}_N \otimes \mathbf{r}_t)]_{p_i} }_F$
\end{proposition}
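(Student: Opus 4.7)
The plan is to exploit the half-quadratic (HQ) structure of the Huber loss. Setting $\phi(u)\triangleq H_c(\sqrt{u})$, the function $\phi$ is concave on $u\ge 0$: it is linear with slope $1/2$ on $[0,c^2]$ and continues as $c\sqrt{u}-c^2/2$ beyond, with continuous derivative at $u=c^2$. Concavity then gives the tangent majorization
$$
H_c(\sqrt{u}) \;\leq\; H_c(\sqrt{u_0})+\tfrac{1}{2}\,w^2(u_0)\,(u-u_0),
\qquad
w^2(u_0)=\begin{cases} 1, & u_0\le c^2,\\[2pt] c/\sqrt{u_0}, & u_0>c^2, \end{cases}
$$
with equality at $u=u_0$. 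This is precisely the weight scheme of the proposition when $u_0=e_i^2(\mathbf{r}_t)$.

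Applying this inequality block by block with $u=\|[\mathbf{Y}-\mathbf{L}-\mathbf{\Psi}(\mathbf{I}_N\otimes\mathbf{r})]_{p_i}\|_F^2$, summing, and adding the unchanged quadratic $\tfrac{\eta}{2}\|\mathbf{S}-\mathbf{R}+\tfrac{1}{\eta}\mathbf{V}\|_F^2$ yields a tight surrogate $g(\mathbf{r}|\mathbf{r}_t)$ which, up to constants, reads
$$
\tfrac{\mu}{4}\sum_{p_i\in\mathcal{P}}w_i^2(\mathbf{r}_t)\bigl\|[\mathbf{Y}-\mathbf{L}-\mathbf{\Psi}(\mathbf{I}_N\otimes\mathbf{r})]_{p_i}\bigr\|_F^2
+\tfrac{\eta}{2}\bigl\|\mathbf{S}-\mathbf{R}+\tfrac{1}{\eta}\mathbf{V}\bigr\|_F^2 .
$$
Since $w_i$ is constant on block $p_i$, the weighted sum compacts to $\tfrac{\mu}{4}\|\mathbf{W}\odot(\mathbf{Y}-\mathbf{L}-\mathbf{\Psi}(\mathbf{I}_N\otimes\mathbf{r}))\|_F^2$. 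Vectorizing and noting that Hadamard multiplication by $\mathbf{W}$ is equivalent to rescaling the matching rows of $\mathbf{\Psi}_A$ produces exactly the weighted dictionary $\mathbf{\Psi}_{AW}=\vect(\mathbf{W})\mathbf{1}^T\odot\mathbf{\Psi}_A$ and weighted observations of the statement.

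The surrogate is a strongly convex quadratic in $\mathbf{r}$. Setting its Wirtinger gradient to zero (following the calculation of Section~\ref{subsubsec:r_step}) yields
$$
\bigl(\tfrac{\mu}{2}\mathbf{\Psi}_{AW}^H\mathbf{\Psi}_{AW}+\eta\mathbf{I}\bigr)\mathbf{r}_{t+1}
=\tfrac{\mu}{2}\mathbf{\Psi}_{AW}^H\bigl(\vect(\mathbf{Y}_W)-\vect(\mathbf{L}_W)\bigr)+\eta\vect(\mathbf{S})+\vect(\mathbf{V}),
$$
whose unique closed-form solution is the announced update.

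For convergence, the MM inequalities $f(\mathbf{r}_{t+1})\le g(\mathbf{r}_{t+1}|\mathbf{r}_t)\le g(\mathbf{r}_t|\mathbf{r}_t)=f(\mathbf{r}_t)$ give monotone decrease of the objective of \eqref{eq:r_step_objfct_fulldecplg}; coercivity induced by $\eta>0$ keeps the iterates bounded; tangency at $\mathbf{r}_t$ together with $\mathcal{C}^1$ regularity of both $f$ and $g(\cdot|\mathbf{r}_t)$ yields stationarity of any accumulation point via the standard MM results of \cite{reviewMM}. The main technical subtlety is the non-smooth breakpoint $|x|=c$ of $H_c$, which is handled because $\phi$ is $\mathcal{C}^1$ across $u=c^2$; the degenerate case $e_i(\mathbf{r}_t)=0$ falls in the quadratic regime and harmlessly sets $w_i^2=1$.
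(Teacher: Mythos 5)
Your proof is correct and reaches the paper's update by essentially the same route: a quadratic majorization of each Huber term turns the subproblem into a weighted least-squares problem in $\mathbf{r}$, which is compacted via the Hadamard weighting $\mathbf{W}$ into the weighted dictionary $\mathbf{\Psi}_{AW}$ and solved in closed form from the first-order (Wirtinger) optimality conditions. The one genuine difference is how the key majorizer is justified. The paper invokes the sharpest-quadratic-majorizer theorem of De Leeuw and Lange \cite[Theorem 4.5]{DELEEUW20092471}, namely $G_c(x\,|\,x_t)=\tfrac{H_c'(x_t)}{2x_t}(x^2-x_t^2)+H_c(x_t)$, whereas you derive the identical inequality from scratch through the half-quadratic observation that $\phi(u)=H_c(\sqrt{u})$ is concave and $\mathcal{C}^1$ on $u\ge 0$, so its tangent at $u_0=e_i^2(\mathbf{r}_t)$ majorizes it; since $\phi'(x_t^2)=H_c'(x_t)/(2x_t)$, the two surrogates coincide and your weights $w_i^2$ match the proposition's exactly. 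Your route is more elementary and self-contained, and it makes both the tangency (equality at $\mathbf{r}_t$, needed for the MM descent property) and the degenerate case $e_i(\mathbf{r}_t)=0$ explicit; the paper's citation buys, in addition, the fact that this quadratic surrogate is the tightest possible. Your closing convergence argument (monotone decrease, coercivity from the $\eta$-term, tangency plus $\mathcal{C}^1$ regularity giving stationarity of accumulation points via \cite{reviewMM}) is also more detailed than the paper's, which simply asserts convergence within the MM framework.
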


\begin{proof}
Consider the vectorized variable $\mathbf{r}$ whose update we can unvectorize for $\mathbf{R}$.
The first step is to find a majorizing function of $H_c(x)$ at some point $x_t$ that we will denote $G_c (x | x_t)$. It must be equal to $H_c$ at the point $x_t$ and greater at all other points.
We can use the result from \cite[Theorem 4.5]{DELEEUW20092471} :
\begin{equation}
	G_{c} (x | x_t) =  \frac{ H'_c (x_t)}{2 x_t} (x^2 - x_t^2) + H_c (x_t)
\end{equation}

This is the sharpest quadratic majorizer. We can obtain:
\begin{equation}
	G_{c} (x | x_t)  = \begin{cases} \frac{1}{2} x^2 &\mbox{if } |x_t| \leq c \\
		\frac{1}{2} \frac{c}{|x_t|} x^2  + \frac{1}{2} c (|x_t| -c )  & \mbox{if } |x_t| > c \end{cases}
\end{equation}


Note $\forall p_i \in \mathcal{P}$ that $e_i(\mathbf{r}) =  \norm{[\mathbf{Y} - \mathbf{L} - \mathbf{\Psi} (\mathbf{I}_N \otimes \mathbf{r})]_{p_i} }_F $ and $e_i(\mathbf{r}_t)$ is the same quantity but with $\mathbf{r}_t$, the variable at the previous MM iteration.

By the definition of $G$ just above, we can write: 
\begin{equation}
\argmin_{\mathbf{r}} G_c (e_i(\mathbf{r}) | e_i(\mathbf{r}_t) ) = 
\argmin_{\mathbf{r}} \frac{1}{2} w_i^2(\mathbf{r}_t)  e_i^2 (\mathbf{r})
\end{equation}
where $w_i^2(\mathbf{r}_t)  = 1$ if $ e_i(\mathbf{r}_t) \leq c \text{ or else } w_i^2(\mathbf{r}_t) = \frac{c}{e_i(\mathbf{r}_t)} $. Also note that we can sum the majorizers over all blocks to get a global one.
Then, it follows that by adding the remaining quadratic term of the objective function, we get the following majorizer at point $\mathbf{r}_t$ to the objective function \eqref{eq:r_step_objfct_fulldecplg} :
\begin{equation}
\mathcal{G}_c (\mathbf{r} | \mathbf{r}_t) = \frac{\mu}{2} \sum_{p_i \in \mathcal{P}} G_c (e_i(\mathbf{r}) | e_i(\mathbf{r}_t) ) + \frac{\eta}{2} \norm{\mathbf{r} - (\vect{\mathbf{S}} + \frac{1}{\eta} \vect{\mathbf{V}}) }^2_F
\end{equation}
So that, via the MM framework, we are left with finding:
\begin{equation} \label{eq:mm_pblm}
\begin{split}
    \mathbf{r}_{t+1} =& \argmin_{\mathbf{r}} \mathcal{G}_c (\mathbf{r} | \mathbf{r}_t) \\
    =&\argmin_{\mathbf{r}} \quad \frac{\mu}{4} \sum_{p_i \in \mathcal{P}} w_i^2(\mathbf{r}_t)  e_i^2 (\mathbf{r}) + \frac{\eta}{2} \norm{\mathbf{r} - (\vect{\mathbf{S}} + \frac{1}{\eta} \vect{\mathbf{V}}) }^2_F  \\
    =&\argmin_{\mathbf{r}} \quad \frac{\mu}{4} \norm{\mathbf{W} \odot (\mathbf{Y} - \mathbf{L} - \mathbf{\Psi} (\mathbf{I}_N \otimes \mathbf{r}))}_F^2 \\ 
    &\hspace{4em} + \frac{\eta}{2}\norm{\mathbf{r} - (\vect{\mathbf{S}} + \frac{1}{\eta} \vect{\mathbf{V}}) }^2_F \\
\end{split}
\end{equation}
where $\mathbf{W}$ is such that $[\mathbf{W}]_{j,k}= w_i(\mathbf{r}_t) $ where the $(j,k)^{th}$ entry is in the $i^{th}$ patch.
To find the minimizer in \eqref{eq:mm_pblm}, we vectorize the first term since the Frobenius norm acts component-wise.
Then:
\begin{equation}
\begin{split}
\mathbf{r}_{t+1} &= \argmin_{\mathbf{r}} \quad  \frac{\mu}{4}
\norm{\mathbf{\Psi}_{AW} \mathbf{r} - (\vect\mathbf{Y}_W - \vect\mathbf{L}_W)}_F^2 \\
&+ \frac{\eta}{2} \norm{\mathbf{r} - (\vect{\mathbf{S}} + \frac{1}{\eta} \vect{\mathbf{V}}) }^2_F
\end{split}
\end{equation}
where $\mathbf{L}_W = \mathbf{W} \odot \mathbf{L}$, $\mathbf{Y}_W = \mathbf{W} \odot \mathbf{Y}$ and
$\mathbf{\Psi}_{AW} = \vect(\mathbf{W})\mathbf{1}^T \odot \mathbf{\Psi}_{A}$.
Via the first order optimality conditions, we get:
\begin{equation}
\begin{split}
\mathbf{r}_{t+1} = &\left(\frac{\mu}{2} \mathbf{\Psi}_{AW(\mathbf{r}_t)}^H \mathbf{\Psi}_{AW(\mathbf{r}_t)} + \eta \mathbf{I} \right)^{-1} \times \\
&\left( \frac{\mu}{2} \mathbf{\Psi}_{AW(\mathbf{r}_t)}^H (\vect\mathbf{Y}_{W(\mathbf{r}_t)} - \vect\mathbf{L}_{W(\mathbf{r}_t)}) + (\eta \vect\mathbf{S} + \vect \mathbf{V}) \right)
\end{split}
\end{equation}

\end{proof}

Finally, the $\mathbf{S}$,$\mathbf{V}$ updates are found in closed form.

\subsubsection{$\mathbf{S}$-update}
The update for $\mathbf{S}$ can be expressed as:
\begin{equation} 
\begin{split}
\min_{\mathbf{S}}  &\quad \lambda \norm{\mathbf{M}}_{2,1}  + \frac{\eta}{2} \norm{\mathbf{S}-\mathbf{R} + \frac{1}{\eta} \mathbf{V}}^2_F
\end{split}
\end{equation}
whose solution is a proximal of the $\ell_{2,1}$-norm:
\begin{equation} 
\mathbf{S} = T_{\lambda / \eta} (\mathbf{R} - \frac{1}{\eta} \mathbf{V})
\end{equation}
where  $T$ is the row thresholding operator.

\subsubsection{$\mathbf{V}$-update}
The $\mathbf{V}$-update is a generic ADMM step of dual ascent:
\begin{equation}
\mathbf{V} = \mathbf{V} + \eta ( \mathbf{S}-\mathbf{R}).
\end{equation}
Moreover, the dual balancing scheme \cite{admm} to adapt the dual hyper-parameters proved useful in practice.
The method is summarized in Algorithm \ref{alg:hkrpca_fullsplit}.

\alglanguage{pseudocode}
\begin{algorithm}[!h]
    \caption{Algorithm for HKRPCA (full variable splitting)}
    \label{alg:hkrpca_fullsplit}
    \begin{algorithmic}[1]
    \State $\text{Have: } \{\mathbf{y}_i\}_{i=1}^{N},\{\mathbf{\Psi}_i\}_{i=1}^{N}$
    \State $\text{Choose: }\lambda, \mu, \nu, \eta , c \text{ and } \mathcal{P}$

    \State$ \mathbf{Y} \triangleq [\mathbf{y}_1, \mathbf{y}_2,  \ldots \mathbf{y}_{N}]$
    \State$ \mathbf{\Psi} \triangleq [\mathbf{\Psi}_1, \mathbf{\Psi}_2,  \ldots \mathbf{\Psi}_{N}]$
    \State$ \mathbf{\Psi}_{A} \triangleq [\mathbf{\Psi}_1^T  \mathbf{\Psi}_2^T  \ldots \mathbf{\Psi}_{N}^T]^T$
    
    \State $\text{Initialize: } \mathbf{L},\mathbf{R},\mathbf{M},\mathbf{S},\mathbf{U},\mathbf{V},\mathbf{W}$ 
    
    \vspace{2mm}
    
    \Repeat :
    \State $[\mathbf{L}]_{p_i} = \text{prox}_{(\mu / 2\nu) H_c \circ \norm{\cdot}_F} \left( [\mathbf{M} + \frac{1}{\nu} \mathbf{U} -\mathbf{Y} + \mathbf{\Psi} (\mathbf{I}_N \otimes \vect(\mathbf{R})) ]_{p_i} \right) \newline 
    \hspace*{3.5em} + [\mathbf{Y} - \mathbf{\Psi} (\mathbf{I}_N \otimes \vect(\mathbf{R})) ]_{p_i}  \quad \forall p_i \in \mathcal{P}$
    
    \Repeat :
    
    \State $\mathbf{E}= \mathbf{Y} -  \mathbf{L} - \mathbf{\Psi} (\mathbf{I}_N \otimes \vect(\mathbf{R}))$
    \State $[\mathbf{W}]_{j,k}=  1 \text{ if } \norm{[\mathbf{E}]_{p_i}}_F \leq c   \text{ else }  \sqrt{c /\ \norm{[\mathbf{E}]_{p_i}}_F}  \quad \forall (j,k)\in p_i$
    \State $\mathbf{\Psi}_{AW} = \vect(\mathbf{W})\mathbf{1}^T \odot \mathbf{\Psi}_{A}$

    \State $ \mathbf{\Psi}_{AWI}  = \left(\frac{\mu}{2}\mathbf{\Psi}_{AW}^H \mathbf{\Psi}_{AW} + \eta \mathbf{I} \right)^{-1}$
    
    \State $\mathbf{r} = \mathbf{\Psi}_{AWI} \left( \frac{\mu}{2}\mathbf{\Psi}_{AW}^H (\vect\mathbf{Y}_W - \vect\mathbf{L}_W) + \frac{1}{\eta}\vect\mathbf{S} + \vect \mathbf{V} \right)$
    \Until{stopping criterion is met}
   
    \State $\mathbf{M}= D_{ 1 / \nu} (\mathbf{L} - \frac{1}{\nu} \mathbf{U})$
    \State $\mathbf{S} = T_{\lambda / \eta} (\mathbf{R} - \frac{1}{\eta} \mathbf{V})$
    
    \State $\mathbf{U}= \mathbf{U} + \nu (\mathbf{M}-\mathbf{L})$
    \State $\mathbf{V}= \mathbf{V} + \eta (\mathbf{S}-\mathbf{R})$

    \Until{stopping criterion is met}
    
    \vspace{2mm}
    		
\end{algorithmic}
\end{algorithm}

\subsection{Convergence analysis}
\subsubsection{Semi-splitting algorithm}

We consider the semi-splitting algorithm for HKRPCA, which we can write in the following equivalent formulation to \eqref{eq:semi_split_hkrpca}:
\begin{equation}
\begin{split}
    \min_{\mathbf{L},\mathbf{R},\mathbf{M}} &\quad \norm{\mathbf{M}}_* + \lambda \norm{ \mathbf{R}  }_{2,1}  + \frac{\mu}{2} \sum_{p_i \in \mathcal{P}}  H_c ( \norm{ \mathbf{S}_{p_i} (-\vect\mathbf{Y} + \vect\mathbf{L} + \mathbf{\Psi}_A \vect \mathbf{R})}_F) \\ \text{s.t.} &\quad \vect\mathbf{M} - 
    [\mathbf{I}_{MN},\mathbf{0}_{MN \times N_x N_z R}]
    \begin{bmatrix} \vect \mathbf{L} \\ \vect \mathbf{R} \end{bmatrix}
    = \mathbf{0}_{MN}
\end{split}
\end{equation}
where $\mathbf{S}_{p_i}$ denotes the selection matrix associated to the $i^{th}$ block, which has a unique or no unit entry in each column/row and zeros elsewhere. $\mathbf{0}_{M \times N} $ denotes a matrix of zeros of $M$ rows by $M$ columns. Such a zeroes matrix acts on $\mathbf{r}$, as it is not split.

Then, the above problem may be cast in a 2-block ADMM with one composite variable $[\vect(\mathbf{L})^T, \vect(\mathbf{R})^T]^T$ with coefficient matrix $[\mathbf{I}_{MN},\mathbf{0}_{MN} \mathbf{\Psi}_A ] = [\mathbf{I}_{MN},\mathbf{0}_{MN \times N_x N_z R} ]$ and associated composite convex objective function being the two latter terms of the objective function fused together.


In practice, solving directly over the composite variable is difficult so we solve for its sub-variables separately in a pass of Block Coordinate Descent (BCD), which is inexact and not part of the standard ADMM framework. Some works denoted Generalized ADMM (GADMM) \cite{fang2015_gadmm} have been developed for approximate minimization but involve the introduction of a relaxation factor which changes the problem to solve. 

We might think to cast the problem in a 3-block ADMM, which has been a topic of research the past few years \cite{Han2022ASO,chen2016}: not necessarily convergent, a simple sufficient condition for its convergence is that any two coefficient matrices in the constraints must be orthogonal to each other. But, in our case, the objective function is not separable in the different components of the composite variable, so that we cannot apply the 3-block ADMM.

Thus, to the best of our knowledge, the analysis of the convergence of such a BCD split in a 2-block ADMM remains an open question while our experiments in the following Section \ref{sec:experiments} show its good practical recovery of the seeked result. The alternative use of GADMM may be investigated but will necessitate to solve new subproblems and to verify some additional suboptimality conditions.
\subsubsection{Full-splitting algorithm}

In the case of a full split of variables i.e. splitting both $\mathbf{L}$ and $\mathbf{r}$, we can rewrite the problem in the equivalent formulation:
\begin{equation}
\begin{split}
    \min_{\mathbf{L},\mathbf{R},\mathbf{M},\mathbf{S}} &\quad \norm{\mathbf{M}}_* + \lambda \norm{\mathbf{S}}_{2,1} +  \frac{\mu}{2} \sum_{p_i \in \mathcal{P}}  H_c ( \norm{ \mathbf{S}_{p_i} (-\vect\mathbf{Y} + \vect\mathbf{L} + \mathbf{\Psi}_A \vect \mathbf{R})}_F) \\ \text{s.t.} &\quad 
    \begin{bmatrix} \vect \mathbf{M} \\ \vect \mathbf{S} \end{bmatrix}
    - \begin{bmatrix} \vect \mathbf{L} \\ \vect \mathbf{R} \end{bmatrix}
    = \begin{bmatrix} \mathbf{0}_{MN} \\ \mathbf{0}_{N_x N_z R} \end{bmatrix}
\end{split}
\end{equation}
we see that it lies in the realm of 2-block ADMM with two composite variables $[\vect(\mathbf{L})^T, \vect(\mathbf{R})^T]^T$ and $[\vect(\mathbf{M})^T, \vect(\mathbf{S})^T]^T$, with the latter term having associated objective function the sum of nuclear and $\ell_{2,1}$ norms.
 
Again, we only do inexact minimization over $[\vect(\mathbf{L})^T, \vect(\mathbf{R})^T]^T$ as well as for $[\vect(\mathbf{M})^T, \vect(\mathbf{S})^T]^T$ via Block Coordinate Descent (BCD). The question of its convergence is thus also open while experiments show good results. 

\subsection{Computational complexity}
\subsubsection{Semi-splitting algorithm}

Assuming the same ordering of dimensions as for KRPCA, i.e. that $D > M > N$ where $D=N_x N_z R$ and that $MN > D$.
The proximal of the Huber function composed with the Frobenius norm (plus a translation) is not the most costly operation as it scales linearly with the input matrix dimensions (so it is $\mathcal{O}(MN)$). 
The evaluation of $\mathbf{\Psi}(\mathbf{I}_N \otimes\mathbf{r})$ is $\mathcal{O}(MND)$ as well as for the gradient evaluation in the PGD. Setting the number of PGD iterations to $K$, we have a computational complexity of $\mathcal{O}(KMND)$ for the algorithm.

\subsubsection{Full-splitting algorithm}

Via full splitting, so with a MM step for $\mathbf{r}$, we have the task of inverting a matrix at each MM iteration (or solving the associated linear system of equations) of size $D$ which will be $\mathcal{O}(D^3)$ via Gaussian elimination. 
However, the major cost is the computation of the matrix product $\mathbf{\Psi}_{AW}^H \mathbf{\Psi}_{AW}$ inside the inverse, which will be $\mathcal{O}(NMD^2)$ and cannot be cached.
This time again, consider $K$ iterations of MM .Then, the cost of the $\mathbf{r}$-update via MM is $\mathcal{O}(KMND^2)$, which will be the overall computational complexity of the full splitting algorithm.
\review{Table \ref{tab:comput_complexities} recapitulates the complexities of all algorithms proposed in this paper. We see the higher iteration cost of the full decoupling method compared to the semi-decoupling one.}

\begin{table}[!h]
\review{
\begin{center}
\begin{tabular}{ |c|c|c|c| } 
\hline
Method &  KRPCA & HRKRPCA SD & HKRPCA FD \\
\hline
Complexity & $\mathcal{O}(MND)$ & $\mathcal{O}(KMND)$  & $\mathcal{O}(KMND^2)$  \\
\hline
\end{tabular}
\end{center}
\caption{Computational complexity of the introduced methods}
\label{tab:comput_complexities}
}
\end{table}

\review{
Figure \ref{fig:convs} presents a study of the convergence speed of the different methods.
In the point-block method, $\forall p_i \in \mathcal{P}$, $p_i$ is the support of the $i^{th}$ entry of $\mathbf{Y}$ in some chosen order. We denote this setup for the semi-decoupling algorithm as HKRPCA SD-pt and HKRPCA FD-pt for the full-decoupling algorithm.
In the column-block method, $\forall p_i \in \mathcal{P}$, $p_i$ is the support of the $i^{th}$ column $\mathbf{y}_i$. We denote this setup for the semi-decoupling algorithm as HKRPCA SD-col and HKRPCA FD-col for the full-decoupling algorithm.
It should be kept in mind that the different methods have different objective functions.
Nevertheless, we see that their convergence in terms of iterations, except SRCS, behave similarly.
Over time, we see that the point-wise HKRPCA methods (HKRPCA SD-pt and HKRPCA FD-pt) perform similarly albeit a bit slower than KRPCA, whereas their column-wise counterparts are noticeably slower (HKRPCA SD-col and HKRPCA FD-col). This is explained by the implementation: the point-wise application of the Huber function can be vectorized over the matrix, whereas the column-wise case necessitates the slicing of the matrix along the columns before applying the Huber function, which is computationally more demanding.
}

\begin{figure}[!h]
    \centering
    \includegraphics[width=0.48\linewidth]{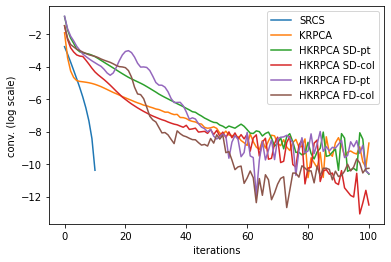}
    \includegraphics[width=0.48\linewidth]{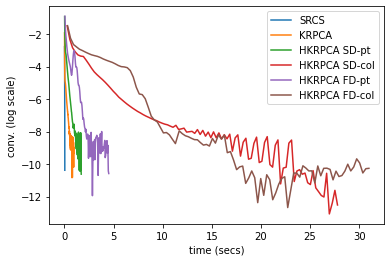}
    \caption{Convergence (log scale) vs iterations (left) and time (right)}
    \label{fig:convs}
\end{figure}

\section{Experiments} \label{sec:experiments}
\subsection{Simulation setup}
\subsubsection{FDTD data}

We test our methods on electromagnetic simulations via Finite-Difference Time-Domain (FDTD) with GprMax \cite{gprmax}.
The scene, as described in Figure \ref{fig:scene_desc}, is $4.9 \times 5.4$ m in crossrange ($x$-axis) vs downrange ($z$-axis) with a discretization step of $3$mm. The front wall, parallel to the SAR movement,is at a standoff distance to the radar of $1.2$ m. It is homogeneous and non-conductive, of thickness $20$cm  and relative permittivity $\epsilon = 4.5$. One target is behind the wall, a perfect electric conductor (PEC) cylinder of radius $3$mm situated at coordinates $(2.6,4)$. The radar moves $2$cm along the $x$-axis between each acquisition, starting from $x=1.824$m, with $67$ different positions overall. The signal sent is a ricker wavelet centered at $2$ GHz.

\subsubsection{Noise generation}
To simulate different data acquisitions, we add random heterogeneous noise drawn from student-t noise. We will consider both pointwise and column wise noise.  The column wise noise heterogeneity may arise as a result of the wall structure, e.g. drywall. The pointwise case may arise by adding the possibility of a frequency-dependant relative permittivity of the wall. Additionally, we consider the possibility of outliers coming from a different random process, which can be interpreted as mishandling in the acquisition process,etc.

We firstly consider two pointwise cases.
\begin{itemize}
\item pointwise noise only: 
$[\mathbf{Y}]_{i,j} =  [\mathbf{L} + \mathbf{\Psi} (\mathbf{I}_N \otimes \vect(\mathbf{R})]_{i,j} + [\mathbf{T}]_{i,j}$
\item pointwise noise + outliers: $[\mathbf{Y}]_{i,j} =  [\mathbf{L} + \mathbf{\Psi} (\mathbf{I}_N \otimes \vect(\mathbf{R})]_{i,j}+ [\mathbf{T}]_{i,j}  + [\mathbf{O}]_{i,j}$
\end{itemize}
with $\mathbf{T}_{i,j}$ being i.i.d. centered univariate complex t-random variables with $f > 2$ degrees of freedom (d.f.) i.e. $\mathbf{T}_{i,j} \sim \mathcal{C}t_{f}(0,\sigma)$  where the standard deviation $\sigma$ is ajusted to get the desired SNR level. 
$\mathbf{O}$ is a matrix of outliers, whose number is set by the user and whose support $\Omega$ is randomly selected at uniform among all entries. The outliers are then drawn from a standard gaussian i.e. $\mathbf{O}_\Omega \sim \mathcal{CN}(\mathbf{0},\mathbf{I})$. Entries of $\mathbf{O}$ not in $\Omega$ are then set to zero.

Secondly we consider two column wise cases.
\begin{itemize}
\item column wise noise only: 
$\mathbf{y}_i=  [\mathbf{L} + \mathbf{\Psi} (\mathbf{I}_N \otimes \vect(\mathbf{R})]_{:,i} + [\mathbf{T}]_{:,i}$
\item column wise noise + outliers: $\mathbf{y}_i=  [\mathbf{L} + \mathbf{\Psi} (\mathbf{I}_N \otimes \vect(\mathbf{R})]_{:,i} + [\mathbf{T}]_{:,i}  + [\mathbf{O}]_{:,i}$
\end{itemize}
where columns of $\mathbf{T}$ are i.i.d. random variables drawn from a $m$-variate t-distribution: $\mathbf{T}_{:,i} \sim \mathcal{C}t_{m,f}(\mathbf{0},\sigma\mathbf{I})$ with $f>2$. The outlying columns are selected uniformly at random among all columns, with their support denoted $\Omega$. The entries of $\mathbf{O}$ on those columns then follow a standard gaussian distribution i.e. $\mathbf{O}_\Omega \sim \mathcal{CN}(\mathbf{0},\mathbf{I})$ while entries not supported on $\Omega$ are set to zero.


\subsubsection{Hyperparameter tuning}
\review{
The hyperparameters have been tuned by hand in the following study.
}
For fair comparisons, all algorithms are used with hyperparameters (when applicable): $\lambda=1, \mu=10, \nu=1, c=0.1, \eta=1\text{e}10$, which have given good results for all methods. They are run the same number of iterations, as all algorithms iterations cycle through every variable, and a comparison in terms of convergence is not possible, the methods converging based on different functionals.
\review{
In order to avoid this tedious process, one may alternatively tune the hyperparameters using bayesian optimisation (see e.g. \cite{bayes_hyperparam_tuning} and references therein).
It uses a Gaussian Process (GP) prior over the f1-score of the detection map of the algorithm to tune. It is then possible to get an analytical formula for the posterior GP and to find sample hyperparameters to evaluate next based on some metric such as Expected Improvement. This can be readily implemented with the package BayesianOptimization \cite{bayesopt}.
 }

\review{
The influence of the hyperparameters $(\lambda,\mu)$ on the performance of HKRPCA has been studied in Figure \ref{fig:auc_grid}. There, each point's Area Under the Curve (AUC) is averaged over $30$ draws. 
We see that there is a fairly large range of values $\lambda \in [0,20], \mu \in [1,100]$ where the AUC is high. Additionally, we observed empirically that the Bayesian hyperparameter tuning method does propose values in this area (e.g. $\lambda = 14, \mu= 99$ here).
}

\begin{figure}[!h]
\centering
\includegraphics[width=0.48\linewidth]{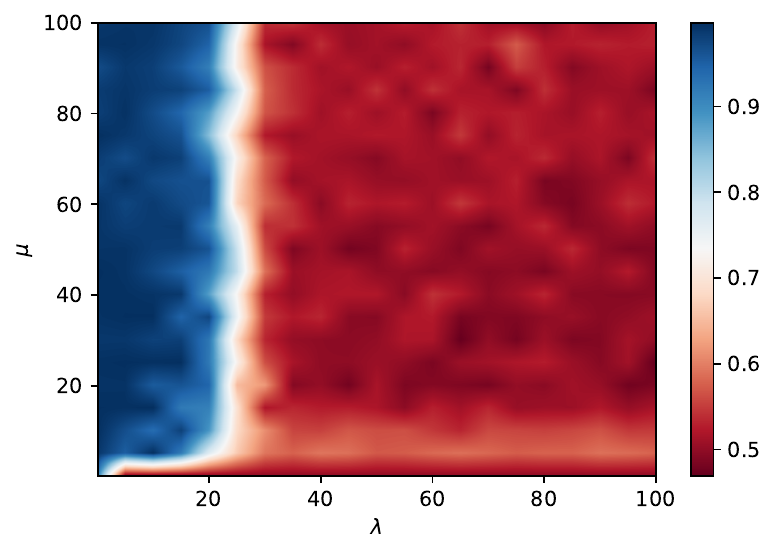}
\includegraphics[width=0.48\linewidth]{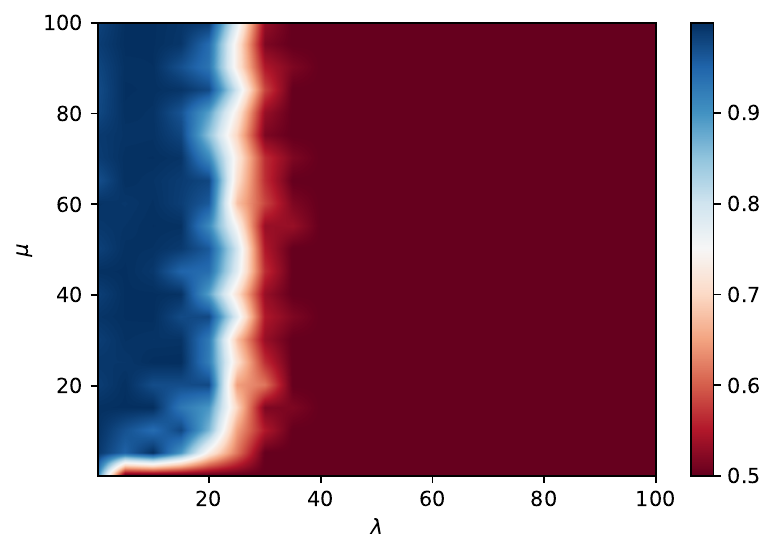}

    \caption{AUC over a grid of hyperparameters for HKRPCA FD-pt (left) and HKRPCA SD-pt (right) with pointwise noise}
    \label{fig:auc_grid}
\end{figure}

\subsection{Performance evaluation}
\label{sec:perf_eval}

Sample results are shown for the different methods in Figure \ref{fig:detectmaps} with pointwise noise only. \review{The target location is indicated with a red circle.}

We evaluate quantitatively the performance of the methods based on their Receiver Operator Characteristic (ROC) averaged over $100$ draws at each point of the curve.

\begin{figure}[!h]
\centering
\begin{subfigure}{0.32\textwidth} 
  \includegraphics[width=\textwidth]{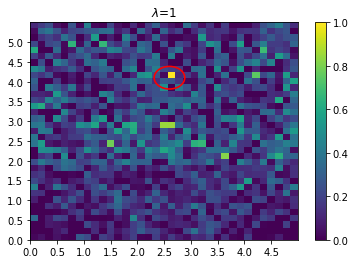}
  \caption{SRCS}
  \label{fig:srcs_detectmap}
\end{subfigure}
\begin{subfigure}{0.32\textwidth} 
  \centering
  \includegraphics[width=\linewidth]{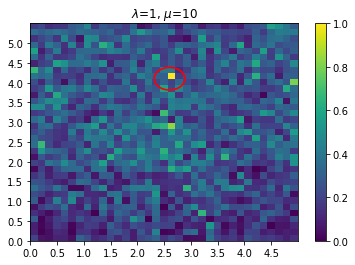}
  \caption{KRPCA}
  \label{fig:krpca_detectmap}
  \end{subfigure}
\begin{subfigure}{0.32\textwidth} 
  \centering
  \includegraphics[width=\linewidth]{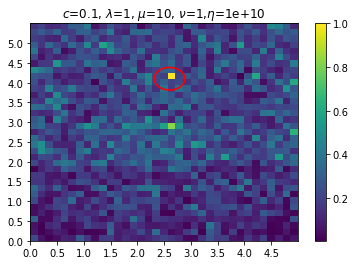}
  \caption{HKRPCA FD-pt}
  \label{fig:hkrpca_detectmap}
  \end{subfigure}
\caption{Sample detection maps (one target with location circled in red)}
 \label{fig:detectmaps}
\end{figure}

\subsubsection{Pointwise noise only}
We begin with a setup consisting in only pointwise heterogeneous noise, that is following a centered multivariate student-t distribution. We chose the setup of degrees of freedom: $ \text{d.f.}= 2.01$ and Signal to Noise Ratio: $\text{SNR}= 10\text{dB}$ to visualize at best the difference in performance of the different methods.
On Figure \ref{fig:roc_pt_noouts}, we plotted the resulting ROC.
We observe that all methods with the Huber cost perform in a similar fashion. KRPCA performs worse and finally SRCS is the worst performing method.

\subsubsection{Pointwise noise and point outliers}
Next, we are interested in a setup with pointwise heterogeneous noise plus $100$ point outliers, i.e. with perturbations coming from a different random process. Here the outlying entries have pointwise noise generated from a univariate standard gaussian distribution.
On Figure \ref{fig:roc_pt_outs}, we have the resulting ROC.
We see that both HKRPCA SD-pt and HKRPCA FD-pt perform similarly and better than  HKRPCA SD-col and HKRPCA FD-col. KRPCA and SRCS are the least well perfoming again.

\begin{figure}[!h]
\centering
\begin{subfigure}{0.49\textwidth} 
  \includegraphics[width=\textwidth]{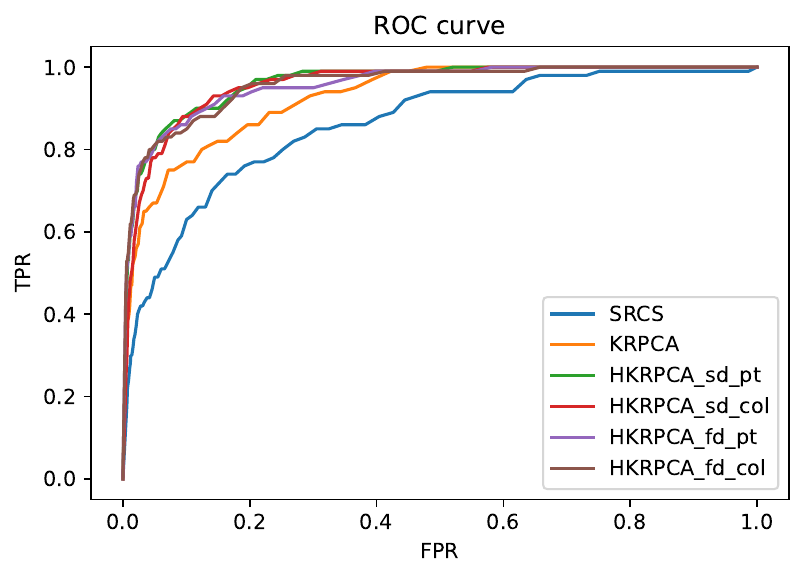}
  \caption{ROC with only pointwise heterogeneous noise (student pointwise noise with d.f. = $2.01$ and SNR= $10$ dB)}
  \label{fig:roc_pt_noouts}
\end{subfigure}
\begin{subfigure}{0.49\textwidth} 
  \centering
  \includegraphics[width=\linewidth]{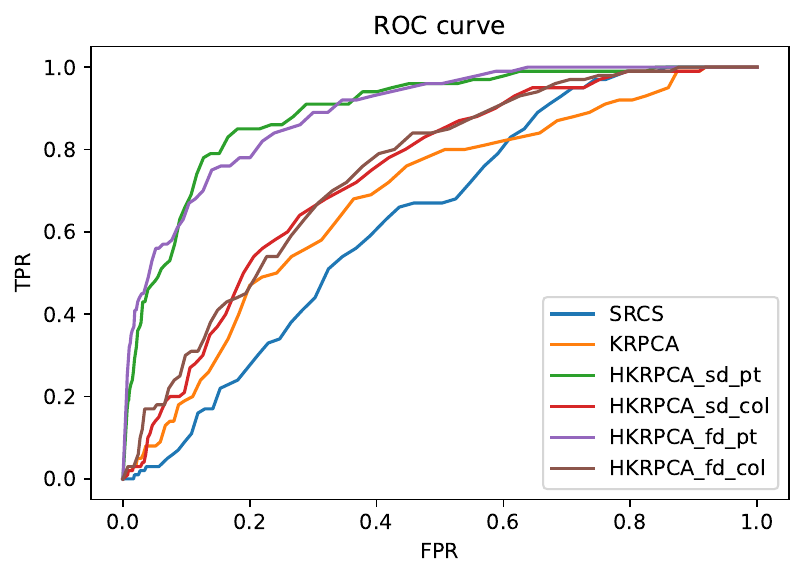}
  \caption{ROC with both pointwise noise and point outliers (student pointwise noise with d.f. = $2.1$, SNR= $12$ dB and  $100$ point outliers)}
  \label{fig:roc_pt_outs}
  \end{subfigure}
\caption{ROC with pointwise corruptions}
\end{figure}

\subsubsection{Column wise noise only}
To evaluate the effect of the block-wise methods, we thus generate block-wise noise to see its effects and the resulting discrepancy in performance of the different methods.
On Figure \ref{fig:roc_col_noouts} we have the resulting ROC with column-wise heterogeneous noise. In our setup, this means that the noise is considered radar position per radar position, and may change in power over radar acquisitions.
We see here, with a bit more degraded setup than previous ones, that HKRPCA FD-col performs the best. Other methods except SRCS are a bit below on the graph, and SRCS is last.

\subsubsection{Column wise noise and column outliers}
For one last setup, we add outliers to the column-wise setup.
To the column-wise heterogeneous noise, we add $25$ column outliers i.e. with column-wise noise generated from a standard multivariate  gaussian.
On Figure \ref{fig:roc_col_outs} we have the corresponding ROC.
We see a clearer separation of performance between all methods.
HRKPCA FD-col performs better than HKRPCA SD-col which in turns performs better than HRKPCA FD-pt. The method HRKPCA SD-pt comes after and KRPCA and SRCS are last.

On the whole, we have seen that the robust cost methods do perform better in heterogeneous noise scenarios, and that the correct block structure does impact the performance of those robust methods, especially and more clearly with outliers. Finally, the full decoupling method with a MM step performs better than the semi-decoupling method for blockwise setups.

\begin{figure}[!h]
\centering
\begin{subfigure}{0.49 \textwidth}
  \includegraphics[width= \textwidth]{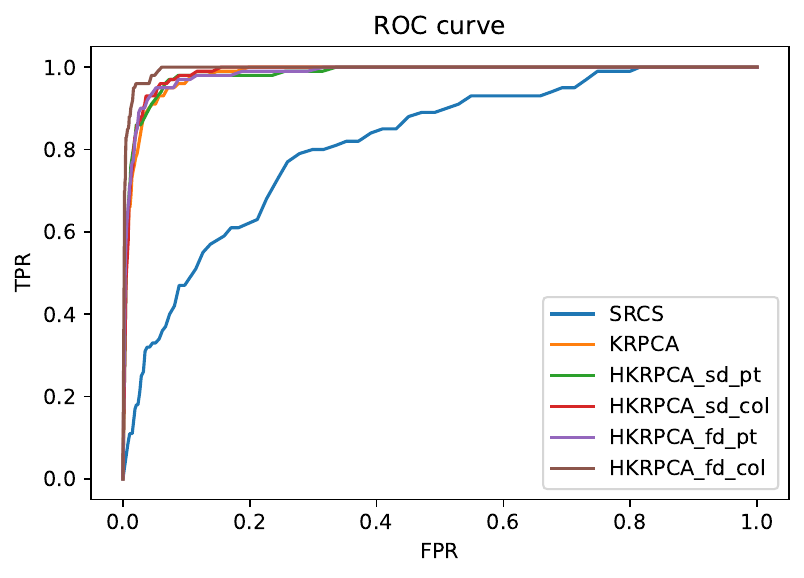}
  \caption{ROC with only column-wise heterogeneous noise (student columnwise noise with d.f. = $2.01$ and SNR= $6$ dB)}
  \label{fig:roc_col_noouts}
\end{subfigure}
\begin{subfigure}{0.49 \textwidth}
  \centering
  \includegraphics[width= \textwidth]{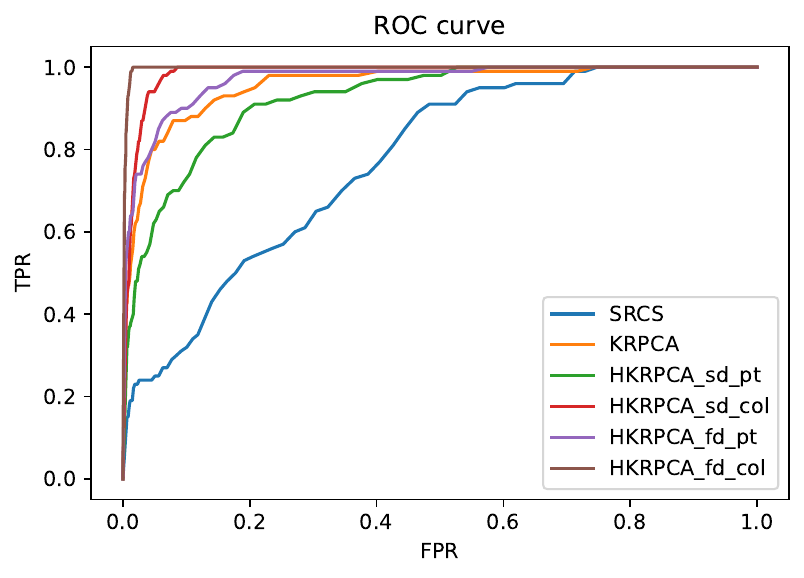}
  \caption{ROC with both column noise and column outliers (student columnwise noise with d.f. = $2.1$, SNR= $12$ dB and  $25$ column outliers)}
  \label{fig:roc_col_outs}
\end{subfigure}
\caption{ROC with column-wise corruptions}
\end{figure}

\section{Conclusion} \label{sec:conclusion}
In this paper, we stated a new method of one-step localisation of targets in the context of TWRI. It is designed to be robust to heterogeneous noise and outliers. The proposed resolution relies on the ADMM framework with two distinct algorithms tailored. One the one hand, a single split of the variable comprising the wall returns results in a closed form proximal evaluation. On the other hand, an additional split of the variable comprising the target returns lends itself to tailored MM step. We show on FDTD simulated data, in more complex scenarios where the noise is heterogeneous or outliers are present, that our method achieves better performances. This suggest further studies on real experimental data where the wall is not an idealized dielectric slab. Additionally, the methods proposed are in fact quite generic, and may be used in similar contexts such as Ground Penetrating Radar (GPR).


\footnotesize
\bibliographystyle{hkrpca} 
\bibliography{biblio}

\begin{thebibliography}{47}
\expandafter\ifx\csname natexlab\endcsname\relax\def\natexlab#1{#1}\fi
\providecommand{\url}[1]{\texttt{#1}}
\providecommand{\href}[2]{#2}
\providecommand{\path}[1]{#1}
\providecommand{\DOIprefix}{doi:}
\providecommand{\ArXivprefix}{arXiv:}
\providecommand{\URLprefix}{URL: }
\providecommand{\Pubmedprefix}{pmid:}
\providecommand{\doi}[1]{\href{http://dx.doi.org/#1}{\path{#1}}}
\providecommand{\Pubmed}[1]{\href{pmid:#1}{\path{#1}}}
\providecommand{\bibinfo}[2]{#2}
\ifx\xfnm\relax \def\xfnm[#1]{\unskip,\space#1}\fi
\bibitem[{Amin(2017)}]{amin_TTW_book}
\bibinfo{author}{M.~Amin}, \bibinfo{title}{Through-the-Wall Radar Imaging},
  \bibinfo{publisher}{CRC Press}, \bibinfo{year}{2017}.
\bibitem[{Li et~al.(2021)Li, Jin, Dai, and Song}]{Li2021_ttw_monitoring}
\bibinfo{author}{Z.~Li}, \bibinfo{author}{T.~Jin}, \bibinfo{author}{Y.~Dai},
  \bibinfo{author}{Y.~Song},
\newblock \bibinfo{title}{Through-wall multi-subject localization and vital
  signs monitoring using uwb mimo imaging radar},
\newblock \bibinfo{journal}{Remote Sensing} \bibinfo{volume}{13}
  (\bibinfo{year}{2021}).
\bibitem[{Yang et~al.(2021)Yang, Zhu, Zhang, and Liang}]{Yang2021_monitoring}
\bibinfo{author}{D.~Yang}, \bibinfo{author}{Z.~Zhu},
  \bibinfo{author}{J.~Zhang}, \bibinfo{author}{B.~Liang},
\newblock \bibinfo{title}{The overview of human localization and vital sign
  signal measurement using handheld ir-uwb through-wall radar},
\newblock \bibinfo{journal}{Sensors} \bibinfo{volume}{21}
  (\bibinfo{year}{2021}).
\bibitem[{Debes et~al.(2011)Debes, Hahn, Zoubir, and Amin}]{debes2011_classif}
\bibinfo{author}{C.~Debes}, \bibinfo{author}{J.~Hahn}, \bibinfo{author}{A.~M.
  Zoubir}, \bibinfo{author}{M.~G. Amin},
\newblock \bibinfo{title}{Target discrimination and classification in
  through-the-wall radar imaging},
\newblock \bibinfo{journal}{IEEE Transactions on Signal Processing}
  \bibinfo{volume}{59} (\bibinfo{year}{2011}) \bibinfo{pages}{4664--4676}.
\bibitem[{Clemente et~al.(2013)Clemente, Balleri, Woodbridge, and
  Soraghan}]{clemente2013microdoppler}
\bibinfo{author}{C.~Clemente}, \bibinfo{author}{A.~Balleri},
  \bibinfo{author}{K.~Woodbridge}, \bibinfo{author}{J.~J. Soraghan},
\newblock \bibinfo{title}{Developments in target micro-doppler signatures
  analysis: radar imaging, ultrasound and through-the-wall radar},
\newblock \bibinfo{journal}{EURASIP Journal on Advances in Signal Processing}
  \bibinfo{volume}{2013} (\bibinfo{year}{2013}) \bibinfo{pages}{1--18}.
\bibitem[{Gennarelli et~al.(2015)Gennarelli, Vivone, Braca, Soldovieri, and
  Amin}]{Gennarelli2015_tracking}
\bibinfo{author}{G.~Gennarelli}, \bibinfo{author}{G.~Vivone},
  \bibinfo{author}{P.~Braca}, \bibinfo{author}{F.~Soldovieri},
  \bibinfo{author}{M.~G. Amin},
\newblock \bibinfo{title}{Multiple extended target tracking for through-wall
  radars},
\newblock \bibinfo{journal}{IEEE Transactions on Geoscience and Remote Sensing}
  \bibinfo{volume}{53} (\bibinfo{year}{2015}) \bibinfo{pages}{6482--6494}.
\bibitem[{Li et~al.(2019)Li, Cui, Kong, Chen, Wang, and Guo}]{Li2019_tracking}
\bibinfo{author}{H.~Li}, \bibinfo{author}{G.~Cui}, \bibinfo{author}{L.~Kong},
  \bibinfo{author}{G.~Chen}, \bibinfo{author}{M.~Wang},
  \bibinfo{author}{S.~Guo},
\newblock \bibinfo{title}{Robust human targets tracking for mimo through-wall
  radar via multi-algorithm fusion},
\newblock \bibinfo{journal}{IEEE Journal of Selected Topics in Applied Earth
  Observations and Remote Sensing} \bibinfo{volume}{12} (\bibinfo{year}{2019})
  \bibinfo{pages}{1154--1164}.
\bibitem[{Protiva et~al.(2011)Protiva, Mrkvica, and
  Machac}]{protiva2011_estim_wall_params}
\bibinfo{author}{P.~Protiva}, \bibinfo{author}{J.~Mrkvica},
  \bibinfo{author}{J.~Machac},
\newblock \bibinfo{title}{Estimation of wall parameters from time-delay-only
  through-wall radar measurements},
\newblock \bibinfo{journal}{IEEE Transactions on Antennas and Propagation}
  \bibinfo{volume}{59} (\bibinfo{year}{2011}) \bibinfo{pages}{4268--4278}.
\bibitem[{Jin et~al.(2013)Jin, Chen, and Zhou}]{Jin2013_estim_wall_params}
\bibinfo{author}{T.~Jin}, \bibinfo{author}{B.~Chen}, \bibinfo{author}{Z.~Zhou},
\newblock \bibinfo{title}{Image-domain estimation of wall parameters for
  autofocusing of through-the-wall sar imagery},
\newblock \bibinfo{journal}{IEEE Transactions on Geoscience and Remote Sensing}
  \bibinfo{volume}{51} (\bibinfo{year}{2013}) \bibinfo{pages}{1836--1843}.
\bibitem[{Wang and Amin(2006)}]{amin2006}
\bibinfo{author}{G.~Wang}, \bibinfo{author}{M.~Amin},
\newblock \bibinfo{title}{Imaging through unknown walls using different
  standoff distances},
\newblock \bibinfo{journal}{IEEE Transactions on Signal Processing}
  \bibinfo{volume}{54} (\bibinfo{year}{2006}) \bibinfo{pages}{4015--4025}.
\bibitem[{Ahmad et~al.(2007)Ahmad, Amin, and Mandapati}]{ahmad_2007}
\bibinfo{author}{F.~Ahmad}, \bibinfo{author}{M.~G. Amin},
  \bibinfo{author}{G.~Mandapati},
\newblock \bibinfo{title}{Autofocusing of through-the-wall radar imagery under
  unknown wall characteristics},
\newblock \bibinfo{journal}{IEEE Transactions on Image Processing}
  \bibinfo{volume}{16} (\bibinfo{year}{2007}) \bibinfo{pages}{1785--1795}.
\bibitem[{Ahmad(2008)}]{ahmad_2008}
\bibinfo{author}{F.~Ahmad},
\newblock \bibinfo{title}{Multi-location wideband through-the-wall
  beamforming},
\newblock in: \bibinfo{booktitle}{2008 IEEE International Conference on
  Acoustics, Speech and Signal Processing}, \bibinfo{year}{2008}, pp.
  \bibinfo{pages}{5193--5196}.
\bibitem[{Dehmollaian and Sarabandi(2008)}]{Dehmollaian2008_sar}
\bibinfo{author}{M.~Dehmollaian}, \bibinfo{author}{K.~Sarabandi},
\newblock \bibinfo{title}{Refocusing through building walls using synthetic
  aperture radar},
\newblock \bibinfo{journal}{IEEE Transactions on Geoscience and Remote Sensing}
  \bibinfo{volume}{46} (\bibinfo{year}{2008}) \bibinfo{pages}{1589--1599}.
\bibitem[{Soumekh and Safari(1999)}]{Soumekh_SAR}
\bibinfo{author}{M.~Soumekh}, \bibinfo{author}{a.~O. M.~C. Safari},
  \bibinfo{title}{Synthetic Aperture Radar Signal Processing with MATLAB
  Algorithms}, \bibinfo{publisher}{Wiley-Interscience}, \bibinfo{year}{1999}.
\bibitem[{Amin and Ahmad(2013)}]{CS_TTW_Radar_Imaging}
\bibinfo{author}{M.~G. Amin}, \bibinfo{author}{F.~Ahmad},
\newblock \bibinfo{title}{{Compressive sensing for through-the-wall radar
  imaging}},
\newblock \bibinfo{journal}{Journal of Electronic Imaging} \bibinfo{volume}{22}
  (\bibinfo{year}{2013}) \bibinfo{pages}{1 -- 22}.
\bibitem[{Verma et~al.(2009)Verma, Gaikwad, Singh, and
  Nigam}]{verma2009clutter}
\bibinfo{author}{P.~K. Verma}, \bibinfo{author}{A.~N. Gaikwad},
  \bibinfo{author}{D.~Singh}, \bibinfo{author}{M.~Nigam},
\newblock \bibinfo{title}{Analysis of clutter reduction techniques for through
  wall imaging in uwb range},
\newblock \bibinfo{journal}{Progress In Electromagnetics Research B}
  \bibinfo{volume}{17} (\bibinfo{year}{2009}) \bibinfo{pages}{29--48}.
\bibitem[{Tivive et~al.(2011)Tivive, Amin, and Bouzerdoum}]{svd_clutter_mitig}
\bibinfo{author}{F.~H.~C. Tivive}, \bibinfo{author}{M.~G. Amin},
  \bibinfo{author}{A.~Bouzerdoum},
\newblock \bibinfo{title}{Wall clutter mitigation based on eigen-analysis in
  through-the-wall radar imaging},
\newblock in: \bibinfo{booktitle}{2011 17th International Conference on Digital
  Signal Processing (DSP)}, \bibinfo{year}{2011}, pp. \bibinfo{pages}{1--8}.
\bibitem[{Tivive et~al.(2015)Tivive, Bouzerdoum, and Amin}]{subspace_proj}
\bibinfo{author}{F.~H.~C. Tivive}, \bibinfo{author}{A.~Bouzerdoum},
  \bibinfo{author}{M.~G. Amin},
\newblock \bibinfo{title}{A subspace projection approach for wall clutter
  mitigation in through-the-wall radar imaging},
\newblock \bibinfo{journal}{IEEE Transactions on Geoscience and Remote Sensing}
  \bibinfo{volume}{53} (\bibinfo{year}{2015}) \bibinfo{pages}{2108--2122}.
\bibitem[{Huang et~al.(2010)Huang, Qu, Wu, and Fang}]{huang2010}
\bibinfo{author}{Q.~Huang}, \bibinfo{author}{L.~Qu}, \bibinfo{author}{B.~Wu},
  \bibinfo{author}{G.~Fang},
\newblock \bibinfo{title}{Uwb through-wall imaging based on compressive
  sensing},
\newblock \bibinfo{journal}{IEEE Transactions on Geoscience and Remote Sensing}
  \bibinfo{volume}{48} (\bibinfo{year}{2010}) \bibinfo{pages}{1408--1415}.
\bibitem[{Leigsnering et~al.(2014)Leigsnering, Ahmad, Amin, and
  Zoubir}]{leigsnering2014multipath}
\bibinfo{author}{M.~Leigsnering}, \bibinfo{author}{F.~Ahmad},
  \bibinfo{author}{M.~Amin}, \bibinfo{author}{A.~Zoubir},
\newblock \bibinfo{title}{Multipath exploitation in through-the-wall radar
  imaging using sparse reconstruction},
\newblock \bibinfo{journal}{IEEE Transactions on Aerospace and Electronic
  Systems} \bibinfo{volume}{50} (\bibinfo{year}{2014})
  \bibinfo{pages}{920--939}.
\bibitem[{Cand{\`e}s et~al.(2011)Cand{\`e}s, Li, Ma, and
  Wright}]{candes2011rpca}
\bibinfo{author}{E.~J. Cand{\`e}s}, \bibinfo{author}{X.~Li},
  \bibinfo{author}{Y.~Ma}, \bibinfo{author}{J.~Wright},
\newblock \bibinfo{title}{Robust principal component analysis?},
\newblock \bibinfo{journal}{Journal of the ACM (JACM)} \bibinfo{volume}{58}
  (\bibinfo{year}{2011}) \bibinfo{pages}{1--37}.
\bibitem[{Chandrasekaran et~al.(2011)Chandrasekaran, Sanghavi, Parrilo, and
  Willsky}]{chandrasekaran2011rpca}
\bibinfo{author}{V.~Chandrasekaran}, \bibinfo{author}{S.~Sanghavi},
  \bibinfo{author}{P.~A. Parrilo}, \bibinfo{author}{A.~S. Willsky},
\newblock \bibinfo{title}{Rank-sparsity incoherence for matrix decomposition},
\newblock \bibinfo{journal}{SIAM Journal on Optimization} \bibinfo{volume}{21}
  (\bibinfo{year}{2011}) \bibinfo{pages}{572–596}.
\bibitem[{Mardani et~al.(2013)Mardani, Mateos, and Giannakis}]{Mardani_2013}
\bibinfo{author}{M.~Mardani}, \bibinfo{author}{G.~Mateos},
  \bibinfo{author}{G.~B. Giannakis},
\newblock \bibinfo{title}{Recovery of low-rank plus compressed sparse matrices
  with application to unveiling traffic anomalies},
\newblock \bibinfo{journal}{IEEE Transactions on Information Theory}
  \bibinfo{volume}{59} (\bibinfo{year}{2013}) \bibinfo{pages}{5186–5205}.
\bibitem[{Tang et~al.(2016)Tang, Bouzerdoum, Phung, and Tivive}]{tang2016}
\bibinfo{author}{V.~H. Tang}, \bibinfo{author}{A.~Bouzerdoum},
  \bibinfo{author}{S.~L. Phung}, \bibinfo{author}{F.~H.~C. Tivive},
\newblock \bibinfo{title}{Radar imaging of stationary indoor targets using
  joint low-rank and sparsity constraints},
\newblock in: \bibinfo{booktitle}{2016 IEEE International Conference on
  Acoustics, Speech and Signal Processing (ICASSP)}, \bibinfo{year}{2016}, pp.
  \bibinfo{pages}{1412--1416}.
\bibitem[{Tang et~al.(2020)Tang, Bouzerdoum, and Phung}]{tang2020}
\bibinfo{author}{V.~H. Tang}, \bibinfo{author}{A.~Bouzerdoum},
  \bibinfo{author}{S.~L. Phung},
\newblock \bibinfo{title}{Compressive radar imaging of stationary indoor
  targets with low-rank plus jointly sparse and total variation
  regularizations},
\newblock \bibinfo{journal}{IEEE Transactions on Image Processing}
  \bibinfo{volume}{29} (\bibinfo{year}{2020}) \bibinfo{pages}{4598--4613}.
\bibitem[{Breloy et~al.(2018)Breloy, El~Korso, Panahi, and
  Krim}]{breloy2018_rsc_rd}
\bibinfo{author}{A.~Breloy}, \bibinfo{author}{M.~N. El~Korso},
  \bibinfo{author}{A.~Panahi}, \bibinfo{author}{H.~Krim},
\newblock \bibinfo{title}{Robust subspace clustering for radar detection},
\newblock in: \bibinfo{booktitle}{2018 26th European Signal Processing
  Conference (EUSIPCO)}, \bibinfo{year}{2018}, pp. \bibinfo{pages}{1602--1606}.
\bibitem[{Mériaux et~al.(2019)Mériaux, Breloy, Ren, El~Korso, and
  Forster}]{meriaux2019_modifssc}
\bibinfo{author}{B.~Mériaux}, \bibinfo{author}{A.~Breloy},
  \bibinfo{author}{C.~Ren}, \bibinfo{author}{M.~N. El~Korso},
  \bibinfo{author}{P.~Forster},
\newblock \bibinfo{title}{Modified sparse subspace clustering for radar
  detection in non-stationary clutter},
\newblock in: \bibinfo{booktitle}{2019 IEEE 8th International Workshop on
  Computational Advances in Multi-Sensor Adaptive Processing (CAMSAP)},
  \bibinfo{year}{2019}, pp. \bibinfo{pages}{669--673}.
\bibitem[{Brehier et~al.(2022)Brehier, Breloy, Ren, Hinostroza, and
  Ginolhac}]{brehier2022robust}
\bibinfo{author}{H.~Brehier}, \bibinfo{author}{A.~Breloy},
  \bibinfo{author}{C.~Ren}, \bibinfo{author}{I.~Hinostroza},
  \bibinfo{author}{G.~Ginolhac},
\newblock \bibinfo{title}{Robust pca for through-the-wall radar imaging},
\newblock in: \bibinfo{booktitle}{2022 30th European Signal Processing
  Conference (EUSIPCO)}, \bibinfo{organization}{IEEE}, \bibinfo{year}{2022},
  pp. \bibinfo{pages}{2246--2250}.
\bibitem[{Huber(1964)}]{huber1964}
\bibinfo{author}{P.~J. Huber},
\newblock \bibinfo{title}{{Robust Estimation of a Location Parameter}},
\newblock \bibinfo{journal}{The Annals of Mathematical Statistics}
  \bibinfo{volume}{35} (\bibinfo{year}{1964}) \bibinfo{pages}{73 -- 101}.
\bibitem[{Boyd et~al.(2011)Boyd, Parikh, Chu, Peleato, and Eckstein}]{admm}
\bibinfo{author}{S.~Boyd}, \bibinfo{author}{N.~Parikh},
  \bibinfo{author}{E.~Chu}, \bibinfo{author}{B.~Peleato},
  \bibinfo{author}{J.~Eckstein},
\newblock \bibinfo{title}{Distributed optimization and statistical learning via
  the alternating direction method of multipliers},
\newblock \bibinfo{journal}{Found. Trends Mach. Learn.} \bibinfo{volume}{3}
  (\bibinfo{year}{2011}) \bibinfo{pages}{1–122}.
\bibitem[{Aravkin et~al.(2014)Aravkin, Becker, Cevher, and
  Olsen}]{Aravkin2014_varpcp}
\bibinfo{author}{A.~Aravkin}, \bibinfo{author}{S.~Becker},
  \bibinfo{author}{V.~Cevher}, \bibinfo{author}{P.~Olsen},
\newblock \bibinfo{title}{A variational approach to stable principal component
  pursuit},
\newblock \bibinfo{journal}{Uncertainty in Artificial Intelligence -
  Proceedings of the 30th Conference, UAI 2014}  (\bibinfo{year}{2014}).
\bibitem[{Ahmad and Amin(2006)}]{ahmad_2006}
\bibinfo{author}{F.~Ahmad}, \bibinfo{author}{M.~G. Amin},
\newblock \bibinfo{title}{Noncoherent approach to through-the-wall radar
  localization},
\newblock \bibinfo{journal}{IEEE Transactions on Aerospace and Electronic
  Systems} \bibinfo{volume}{42} (\bibinfo{year}{2006})
  \bibinfo{pages}{1405--1419}.
\bibitem[{Tibshirani(1996)}]{Tibshirani1996_lasso}
\bibinfo{author}{R.~Tibshirani},
\newblock \bibinfo{title}{Regression shrinkage and selection via the {Lasso}},
\newblock \bibinfo{journal}{Journal of the Royal Statistical Society. Series B
  (Methodological)} \bibinfo{volume}{58} (\bibinfo{year}{1996})
  \bibinfo{pages}{267--288}.
\bibitem[{Kowalski(2009)}]{kowalski_mixed_norms}
\bibinfo{author}{M.~Kowalski},
\newblock \bibinfo{title}{Sparse regression using mixed norms},
\newblock \bibinfo{journal}{Applied and Computational Harmonic Analysis}
  \bibinfo{volume}{27} (\bibinfo{year}{2009}) \bibinfo{pages}{303--324}.
\bibitem[{Fazel(2002)}]{fazel2002matrix}
\bibinfo{author}{M.~Fazel}, \bibinfo{title}{Matrix rank minimization with
  applications}, Ph.D. thesis, PhD thesis, Stanford University,
  \bibinfo{year}{2002}.
\bibitem[{Parikh et~al.(2014)Parikh, Boyd et~al.}]{parikh2014proximal}
\bibinfo{author}{N.~Parikh}, \bibinfo{author}{S.~Boyd}, et~al.,
\newblock \bibinfo{title}{Proximal algorithms},
\newblock \bibinfo{journal}{Foundations and Trends{\textregistered} in
  Optimization} \bibinfo{volume}{1} (\bibinfo{year}{2014})
  \bibinfo{pages}{127--239}.
\bibitem[{Ollila et~al.(2012)Ollila, Tyler, Koivunen, and
  Poor}]{Ollila2012_ces_review}
\bibinfo{author}{E.~Ollila}, \bibinfo{author}{D.~E. Tyler},
  \bibinfo{author}{V.~Koivunen}, \bibinfo{author}{H.~V. Poor},
\newblock \bibinfo{title}{Complex elliptically symmetric distributions: Survey,
  new results and applications},
\newblock \bibinfo{journal}{IEEE Transactions on Signal Processing}
  \bibinfo{volume}{60} (\bibinfo{year}{2012}) \bibinfo{pages}{5597--5625}.
\bibitem[{Maronna et~al.(2019)Maronna, Martin, Yohai, and
  Salibi{\'a}n-Barrera}]{maronna2019robust}
\bibinfo{author}{R.~Maronna}, \bibinfo{author}{R.~Martin},
  \bibinfo{author}{V.~Yohai}, \bibinfo{author}{M.~Salibi{\'a}n-Barrera},
  \bibinfo{title}{Robust Statistics: Theory and Methods (with R)}, Wiley Series
  in Probability and Statistics, \bibinfo{publisher}{Wiley},
  \bibinfo{year}{2019}.
\bibitem[{Beck(2017)}]{first_order_methods}
\bibinfo{author}{A.~Beck}, \bibinfo{title}{First-Order Methods in
  Optimization}, \bibinfo{publisher}{SIAM-Society for Industrial and Applied
  Mathematics}, \bibinfo{address}{Philadelphia, PA, USA}, \bibinfo{year}{2017}.
\bibitem[{Sun et~al.(2016)Sun, Babu, and Palomar}]{reviewMM}
\bibinfo{author}{Y.~Sun}, \bibinfo{author}{P.~Babu},
  \bibinfo{author}{D.~Palomar},
\newblock \bibinfo{title}{Majorization-minimization algorithms in signal
  processing, communications, and machine learning},
\newblock \bibinfo{journal}{IEEE Trans. on Signal Process.}
  \bibinfo{volume}{PP} (\bibinfo{year}{2016}) \bibinfo{pages}{794--816}.
\bibitem[{{de Leeuw} and Lange(2009)}]{DELEEUW20092471}
\bibinfo{author}{J.~{de Leeuw}}, \bibinfo{author}{K.~Lange},
\newblock \bibinfo{title}{Sharp quadratic majorization in one dimension},
\newblock \bibinfo{journal}{Computational Statistics and Data Analysis}
  \bibinfo{volume}{53} (\bibinfo{year}{2009}) \bibinfo{pages}{2471--2484}.
\bibitem[{Fang et~al.(2015)Fang, He, Liu, and Yuan}]{fang2015_gadmm}
\bibinfo{author}{E.~Fang}, \bibinfo{author}{B.-S. He},
  \bibinfo{author}{H.~Liu}, \bibinfo{author}{X.~Yuan},
\newblock \bibinfo{title}{Generalized alternating direction method of
  multipliers: New theoretical insights and applications},
\newblock \bibinfo{journal}{Mathematical Programming Computation}
  \bibinfo{volume}{7} (\bibinfo{year}{2015}).
\bibitem[{Han(2022)}]{Han2022ASO}
\bibinfo{author}{D.-R. Han},
\newblock \bibinfo{title}{A survey on some recent developments of alternating
  direction method of multipliers},
\newblock \bibinfo{journal}{Journal of the Operations Research Society of
  China}  (\bibinfo{year}{2022}) \bibinfo{pages}{1--52}.
\bibitem[{Chen et~al.(2016)Chen, He, Ye, and Yuan}]{chen2016}
\bibinfo{author}{C.~Chen}, \bibinfo{author}{B.~He}, \bibinfo{author}{Y.~Ye},
  \bibinfo{author}{X.~Yuan},
\newblock \bibinfo{title}{The direct extension of admm for multi-block convex
  minimization problems is not necessarily convergent},
\newblock \bibinfo{journal}{Math. Program.} \bibinfo{volume}{155}
  (\bibinfo{year}{2016}) \bibinfo{pages}{57–79}.
\bibitem[{Warren et~al.(2016)Warren, Giannopoulos, and Giannakis}]{gprmax}
\bibinfo{author}{C.~Warren}, \bibinfo{author}{A.~Giannopoulos},
  \bibinfo{author}{I.~Giannakis},
\newblock \bibinfo{title}{{gprMax}: Open source software to simulate
  electromagnetic wave propagation for ground penetrating radar},
\newblock \bibinfo{journal}{Computer Physics Communications}
  \bibinfo{volume}{209} (\bibinfo{year}{2016}) \bibinfo{pages}{163--170}.
\bibitem[{Snoek et~al.(2012)Snoek, Larochelle, and
  Adams}]{bayes_hyperparam_tuning}
\bibinfo{author}{J.~Snoek}, \bibinfo{author}{H.~Larochelle},
  \bibinfo{author}{R.~P. Adams},
\newblock \bibinfo{title}{Practical bayesian optimization of machine learning
  algorithms},
\newblock in: \bibinfo{editor}{F.~Pereira}, \bibinfo{editor}{C.~Burges},
  \bibinfo{editor}{L.~Bottou}, \bibinfo{editor}{K.~Weinberger} (Eds.),
  \bibinfo{booktitle}{Advances in Neural Information Processing Systems},
  volume~\bibinfo{volume}{25}, \bibinfo{publisher}{Curran Associates, Inc.},
  \bibinfo{year}{2012}.
\bibitem[{Nogueira(14  )}]{bayesopt}
\bibinfo{author}{F.~Nogueira}, \bibinfo{title}{{Bayesian Optimization}: Open
  source constrained global optimization tool for {Python}},
  \bibinfo{year}{2014--}.

\end{thebibliography}

\end{document}